\newtheorem{theorem}{Theorem}
\newtheorem{corollary}{Corollary}[theorem]
\begin{document}
\title{Dictionary Learning Based Sparse Channel Representation and Estimation for FDD Massive MIMO Systems}

\author{Yacong~Ding,~\IEEEmembership{Student Member,~IEEE,}
        Bhaskar~D.~Rao,~\IEEEmembership{Fellow,~IEEE}
\thanks{Y. Ding and B. D. Rao are with Department of Electrical and Computer Engineering, University of California, San Diego, La Jolla, CA, 92093-0407 USA (e-mail: \{yad003,brao\}@ucsd.edu).}
}

\maketitle

\begin{abstract}
This paper addresses the problem of uplink and downlink channel estimation in FDD Massive MIMO systems. By utilizing sparse recovery and compressive sensing algorithms, we are able to improve the accuracy of the uplink/downlink channel estimation and reduce the number of uplink/downlink pilot symbols. Such successful channel estimation builds upon the assumption that the channel can be sparsely represented under some basis/dictionary. Previous works model the channel using some predefined basis/dictionary, while in this work, we present a dictionary learning based channel model such that a dictionary is learned from comprehensively collected channel measurements. The learned dictionary adapts specifically to the cell characteristics and promotes a more efficient and robust channel representation, which in turn improves the performance of the channel estimation. Furthermore, we extend the dictionary learning based channel model into a joint uplink/downlink learning framework by observing the reciprocity of the AOA/AOD between the uplink/downlink transmission, and propose a joint channel estimation algorithm that combines the uplink and downlink received training signals to obtain a more accurate channel estimate. In other words, the downlink training overhead, which is a bottleneck in FDD Massive MIMO system, can be reduced by utilizing the information from simpler uplink training.
\end{abstract}

\begin{IEEEkeywords}
Channel estimation, dictionary learning, compressive sensing, joint dictionary learning, joint sparse recovery, FDD, Massive MIMO
\end{IEEEkeywords}

\IEEEpeerreviewmaketitle

\section{Introduction}
\IEEEPARstart{M}{assive} Multiple-Input Multiple-Output (MIMO) systems have been proposed for the next generation of communication systems. 
By deploying a large antenna array at the base station (BS), both receive combining and transmit beamforming can be performed with narrow beams, thereby eliminating multiuser interference and  increasing the cell throughput.
For effective uplink (UL) combining and downlink (DL) precoding, it is essential to have accurate knowledge of the channel state information (CSI) at BS. The common assumption in Massive MIMO is that each user equipment (UE) only has a small number of antennas, therefore it is relatively easy to have the uplink CSI since the uplink training overhead is only proportional to the number of users \cite{marzetta2006much}. In a time-division duplexing (TDD) system, downlink CSI can also be easily obtained by exploiting the uplink/downlink channel reciprocity.
On the other hand, channel reciprocity is no longer valid in a frequency-division duplexing (FDD) system because the uplink and downlink transmission are operated at different frequencies. In order to have downlink CSI, the BS has to perform downlink training. Subsequently, the user needs to estimate, quantize and feedback the channel state information. When conventional channel estimation and feedback schemes are used, the downlink training and feedback overhead are proportional to the number of antennas at the base station. The large antenna array in the Massive MIMO system makes such training impractical due to the high overhead and infeasible when the coherence time of the channel is limited. However, since FDD system is generally considered to be more effective for systems with symmetric traffic and delay-sensitive applications, most cellular systems today employ FDD \cite{adhikary2013joint, rao2014distributed}.

To alleviate the overhead of downlink channel training and feedback in a FDD Massive MIMO system, one option is to explore possible underlying channel structure whereby the high dimensional channel vector has a low dimensional representation \cite{bajwa2010compressed, adhikary2013joint, rao2014distributed}. Motivated by the framework of \emph{Compressive Sensing} (CS), if the desired signal (channel response) can be sparsely represented in some basis or dictionary, then it can be robustly recovered with the number of measurements (downlink pilot symbols) only proportional to the number of nonzero entries in the representation
\cite{candes2008introduction}. This indicates that when such basis or dictionary does exist and leads to a very sparse representation, we are able to greatly reduce the downlink training overhead. Fortunately, the limited scattering environment implies the low dimensionality of the channel, and the large antenna array provides finer angular resolution to resolve the limited scattering and represent channel sparsely \cite{sayeed2002deconstructing, tse2005fundamentals}. Many previous works have proposed efficient downlink channel estimation and feedback algorithms based on this sparse assumption \cite{bajwa2010compressed, adhikary2013joint, rao2014distributed, gao2015spatially, shen2016compressed, zhou2016sparse,han2017compressed}.

In this paper, besides the downlink channel estimation, we also utilize the sparse properties of the channel for the uplink channel estimation. What is more, a new channel modeling framework based on \emph{learning} techniques is developed, 
and is extended into a joint uplink/downlink channel representation by observing the reciprocity between the uplink and downlink transmission. 
In the following, we review the previous works and summarize the contributions. Preliminary versions of this work have appeared in \cite{ding2015compressed, ding2015channel}:

\textit{1)} We formulate the uplink channel estimation explicitly into a sparse recovery problem. Although the compressive sensing formulation has been applied widely in the downlink channel estimation, utilizing sparse property for the uplink has only received limited attention \cite{nguyen2013compressive,ding2018bayesian}. We show that with both appropriate pilots design and non-overlapping (or limited overlapping in practice) sparse supports of users, good estimation accuracy can be achieved even with pilot symbols \textit{less} than the number of users, which is the underdetermined case for conventional least square channel estimation. 

\textit{2)} We propose a \emph{dictionary learning based channel model} (DLCM), where a learned \emph{overcomplete} dictionary is used to represent the channel in some \emph{specific cell}. To learn the dictionary, a large number of channel measurements need to be collected from different locations in a specific cell at the cell deployment stage, and used as the training samples for the dictionary learning algorithm. The learned dictionary is able to adapt to the cell characteristics as well as ensure a sparse representation of the channel. Since  no structural constraints are placed on the dictionary, the approach is applicable to an arbitrary array geometry and does \textit{not} require accurate array calibration. We demonstrate the improved channel estimation performance when applying the learned dictionary, compared to existing works which utilize some predefined basis. In \cite{schniter2014channel}, an aperture shaping scheme has been proposed that promotes sparse representation in the virtual channel model. Notice that the dictionary learning concept itself has been widely investigated in previous works \cite{engan1999method, kreutz2003dictionary, aharon2006svd}, with many applications such as image denoising and feature extraction. But to the best of our knowledge, our work is the first to utilize the dictionary learning framework to model the Massive MIMO channel.

\textit{3)} We develop a general framework of \textit{joint} uplink/downlink dictionary learning based channel model (JDLCM) and channel estimation by observing the reciprocity resides in the uplink and downlink channels. Although in FDD systems the uplink and downlink are operated in different frequency band, the propagation environment is the same for the uplink and downlink transmission when the duplex distance is not large \cite{paulraj1997space, hugl2002spatial}. This motivates a joint sparse representation of uplink and downlink channels, and enables the use of information from the more easily obtained uplink training to help downlink channel estimation. In FDD systems, leveraging uplink channel information for the downlink use has been proposed, for example using uplink signals to compute direction of arrival (DOA) and construct downlink channel response \cite{zetterberg1995spectrum}
or utilizing uplink channel covariance matrix to estimate downlink channel covariance matrix \cite{hugl1999downlink}.
To the best of our knowledge, our work is the first to explore the jointly sparse representation as an abstract model for the uplink and downlink channel reciprocity, and develop joint channel estimation algorithms to improve the channel estimation performance.

Notations used in this paper are as follows. Upper (lower) bold face letters are used throughout to denote matrices (column vectors). \((\cdot)^T\), \((\cdot)^H\) \((\cdot)^{\dagger}\) denotes the transpose, Hermitian transpose, and the Moore-Penrose pseudo-inverse. \(\bm{A}_{i\cdot}\) and \(\bm{A}_{\cdot j}\) represents the \(i\)-th row and \(j\)-th column of \(\bm{A}\), and for a set \(\mathcal{S}\) we denote \(\bm{A}_{\mathcal{S}}\) to be the submatrix of \(\bm A\) that contains columns indexed by elements of \(\mathcal{S}\).
For a vector \(\bm x\), \(\text{diag}(\bm x)\) is a diagonal matrix with entries of \(\bm x\) along its diagonal. \(\|\bm x\|_1,\|\bm x\|_2\) denotes the \(\ell_1\) and \(\ell_2\) norm. \(\|\bm x\|_0\) represents the number of nonzero entries in \(\bm x\) and is referred to as the \(\ell_0\) norm. \(\text{supp}(\bm x)\) denotes the set of indices such that the corresponding entries of \(\bm x\) are nonzero.

\section{Channel Estimation Based on Sparse Channel Model}\label{section: system model}
\subsection{Physical Channel Model}
We consider a single cell Massive MIMO system operated in FDD mode. The BS is equipped with \(N\) antennas and each UE has a single antenna. Assume a narrowband block fading channel, we adopt a simplified spatial channel model which captures the physical propagation structure of either the uplink or the downlink transmission as
\begin{equation}\label{equ: SCM_downlink}
\begin{aligned}
\bm h = \sum_{i=1}^{N_{c}}\sum_{l=1}^{N_{s}}\alpha_{il}\bm a(\Omega_{il})
\end{aligned}
\end{equation}
where \(N_{c}\) is the number of scattering clusters, each of which contains \(N_{s}\) propagation subpaths. \(\alpha_{il}\) is the complex gains of the \(l\)-th subpath in the \(i\)-th scattering cluster for the uplink or the downlink. For 2D channel model \cite{sayeed2002deconstructing, tse2005fundamentals, 3gpp.25.996}, \(\Omega_{il} = \{\theta_{il}\}\) denotes the angle of arrival (AOA) for the uplink transmission or the angle of departure (AOD) for the downlink. \(\bm a(\Omega_{il})\) is the array response vectors, and for a uniform linear array (ULA) 
\begin{equation}\label{equ: array response}
\begin{aligned}
&\bm a(\theta) = [1, e^{j2\pi\frac{d}{\lambda}\text{sin}(\theta)}, \ldots , e^{j2\pi\frac{d}{\lambda}\text{sin}(\theta)\cdot(N-1)}]^T \\
\end{aligned}
\end{equation}
where \(d\) is the antenna spacing and \(\lambda\) is the wavelength of uplink or downlink propagation. 
For 3D channel model \cite{ying2014kronecker,3gpp.38.901}, \(\Omega_{il} = \{\theta_{il}, \phi_{il}\}\), where \(\theta_{il}, \phi_{il}\) denotes the zenith angle of arrival (ZOA) and azimuth angle of arrival (AOA) for the uplink, and zenith angle of departure (ZOD) and azimuth angle of departure (AOD) for the downlink. For a uniform rectangular array (URA) with \(N_1\) vertical antennas spaced by \(d_1\) and \(N_2\) horizontal antennas with \(d_2\) spacing, \(N_1N_2= N\), the array response vectors is given as \cite{ying2014kronecker}
\begin{equation}\label{equ: array response URA}
\begin{aligned}
&\bm a(\theta, \phi) = \bm q(v) \otimes \bm p(u) \\
\end{aligned}
\end{equation}
where \(\bm p(u) = [1, e^{ju}, \ldots , e^{j(N_1-1)u}]^T\), \(\bm q(v) = [1, e^{jv}, \ldots , e^{j(N_2-1)v}]^T\), \(u = 2\pi d_1 \text{cos}(\theta)/\lambda\), and \(v = 2\pi d_2 \text{sin}(\theta) \text{cos}(\phi)/\lambda\).

In order to model the scattering clusters, we consider the principles of \emph{Geometry-Based Stochastic Channel Model} (GSCM) \cite{molisch2003geometry}, as illustrated in Fig. \ref{fig:cell_downlink}. For a specific cell, the locations of  the dominant scattering clusters are determined by cell specific attributes such as the buildings, and are common to all the users irrespective of user position.
We assume such scattering clusters are far away from the base station, so the subpaths associated with a specific scattering cluster will be concentrated in a small range, 
i.e., having a small angular spread (AS). While modeling the scattering effects which are user-location dependent, for example the ground reflection close to the user, or some moving physical scatterers near the user,  we assume the UE is far away from the base station, so subpaths associated with the user-location dependent scattering cluster also have small angular spread.
Since the BS is far away and is commonly assumed to be mounted at a height, the number of scattering clusters that contribute to the channel responses is limited, i.e., \(N_c\) is small. Because the number of scattering clusters is limited and each of them spans a small AS, there are only limited dimensions being occupied when viewed from the angular domain. Furthermore, the large antenna array at the BS leads to narrower beamwidth, resulting in smaller leakage effect of some scattering cluster to the other angular bins.
Due to the limited scattering effect and the large antenna array, it is reasonable to assume a low dimensional representation for the large Massive MIMO channel \cite{bajwa2010compressed, adhikary2013joint, rao2014distributed, gao2015spatially, shen2016compressed,han2017compressed}.

\begin{figure}[!t]
  \centering
  \includegraphics[width=0.47\textwidth]{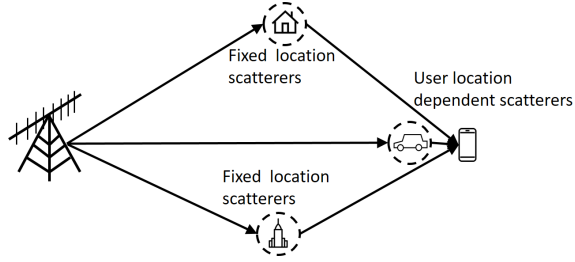}
\caption{Illustration of signal propagation in a typical cell}
\label{fig:cell_downlink}
\end{figure}

\subsection{Compressive Sensing Based Downlink Channel Estimation}
For the downlink channel estimation in FDD system, the BS transmits training pilots. The UE estimates the channel and feed back the channel state information to the BS. The received signal \(\bm y^d\) at the UE is given as
\begin{equation}
\bm y^d = \bm A \bm h^d +\bm w^d
\end{equation}
where \(\bm h^d\in\mathbb{C}^{N\times1}\) denotes the downlink channel response, \(\bm w^d\in\mathbb{C}^{T\times 1}\) is the received noise vector such that \(\bm w^d\sim\mathcal{CN}(0,\textbf{I})\). \(\bm A\in\mathbb{C}^{T^d\times N}\) is the downlink pilots transmitted during the training period of \(T^d\) symbols, where \(\|\bm A\|_F^2= \rho^d T^d\) such that \(\rho^d\) measures the training SNR.
Using conventional channel estimation technique such as \emph{Least Square} (LS) channel estimation,
the estimated channel is given by
\begin{equation}\label{equ:LS channel estiamtion}
\hat{\bm h}_{LS}^d=\bm A^{\dagger}\bm y^d
\end{equation}
where \(\bm A^{\dagger}\) is the Moore-Penrose pseudoinverse. Robust recovery of \(\bm h^d\) by LS channel estimation requires \(T^d \ge N\), which means the training period has to be larger than the number of antennas. In a Massive MIMO system \(N\) is very large making this infeasible. Moreover, the UE needs to feed back channel information to the BS, which also requires feedback resources proportional to channel dimension \(N\). The finite channel coherence time further exacerbates the situation.

In order to robustly estimate downlink channel with limited training overhead, compressive sensing based channel estimation has been proposed in previous works \cite{bajwa2010compressed, rao2014distributed, gao2015spatially, shen2016compressed, zhou2016sparse,han2017compressed}, and we briefly review the main steps in the following. In the compressive sensing framework, methods to measure a high dimensional signal (\(\bm h^d\)) have been proposed with much smaller measurements (\(T^d<N\)), provided the original signal can be sparsely represented in some sparsifying basis \cite{candes2008introduction}. Assume there exists a \emph{sparsifying matrix} \(\bm D^d\in\mathbb{C}^{N\times M}\) (\(M\ge N\)) such that \(\bm h^d=\bm D^d\bm \beta^d\), where the  representation vector \(\bm \beta^d\in\mathbb{C}^{M\times 1}\) is sparse, i.e., \(s = \|\bm \beta^d \|_0\ll N\).
Then the downlink channel estimation can be written as
\begin{equation}\label{equ: downlink compressed training}
\bm y^d = \bm A \bm h^d +\bm w^d = \bm A \bm D^d \bm \beta^d + \bm w^d.
\end{equation}
Given \(\bm y^d, \bm A\) and \(\bm D^d\), if we are able to solve for \(\bm \beta^d\), then the channel estimate is obtained as \(\hat{\bm h}^d=\bm D^d \bm \beta^d\). However, (\ref{equ: downlink compressed training}) is an underdetermined system if we plan to use a small number of training samples \(T^d<N\). The system will in general have an infinite number of solutions for \(\bm \beta^d\) and the sparsity assumption provides a mechanism to regularize the problem. Consider the minimum sparsity assumption that \(s\ll N\) and assuming \(\|\bm w^d\|_2\le \epsilon\), then the problem reduces to
\begin{equation}\label{equ: l0 minimization}
\begin{aligned}
&\hat{\bm \beta}^d=\arg\min_{\bm \beta^d} \|\bm \beta^d\|_0 \quad \text{subject to} \quad \|\bm y^d -\bm A\bm D^d\bm \beta^d\|_2\le \epsilon\\
\end{aligned}
\end{equation}
and \(\hat{\bm h}_{CS}^d=\bm D^d\hat{\bm \beta}^d\).
Notice that the optimization formula in (\ref{equ: l0 minimization}) is non-convex,  and a number of suboptimal but effective algorithms have been proposed to solve the problem \cite{choi2017compressed}. One of the most widely used framework is to relax the \(\ell_0\) norm \(\|\bm \beta^d\|_0\) to the \(\ell_1\) norm \(\|\bm \beta^d\|_1\). It has been shown that under certain conditions on \(\bm A\bm D^d\),  based on the $\ell_1$ norm criteria a solution of \(\bm \beta^d\) with bounded error can be obtained with \(T^d\ge c\cdot s\text{log}(N/s)\), where \(c\) is some constant\cite{candes2008introduction}. Instead of using a training period proportional to the channel dimension \(N\), we can compute good channel estimate with training period proportional to sparsity level \(s\), which is assumed to be much less than \(N\). This  makes downlink channel estimation feasible in a limited training period.

The CS based downlink channel estimation in (\ref{equ: l0 minimization}) is for single antenna at the UE, and we show in the following how to extend it to scenario where UE has multiple antennas. Assume \(N_T\) antennas at BS and \(N_R\) antennas at UE, then the channel \(\bm H^d = [\bm h_1^d, \ldots, \bm h_{N_R}^d]\in \mathbb{C}^{N_T \times N_R}\) where the column \(\bm h_k^d\) denotes the channel from \(N_T\) BS antennas to the \(k\)-th UE antenna. Since the antenna aperture at UE is much smaller than the distance between the antenna and the scattering clusters in the environment, the scattering clusters that affect the signal transmission are the same for all antenna elements at the UE side. With the sparse representation \(\bm h_k^d = \bm D^d\bm \beta_k^d,\forall k\), it implies the support of \(\bm \beta_k^d\) is the same for all \(N_R\) antennas, i.e., \(\text{supp}(\bm \beta_1^d)=\ldots=\text{supp}(\bm \beta_{N_R}^d)\). Denoting \(\bm B^d = [\bm \beta_1^d,\ldots,\bm \beta_{N_R}^d]\), then \(\bm H^d = \bm D^d \bm B^d\) and the matrix \(\bm B^d\) is row sparse. 
Similar observation can be made when applying the virtual channel model \(\bm H^d = \bm A_T \tilde{\bm H}^d \bm A_R^H\) \cite{bajwa2010compressed,tse2005fundamentals}, where \(\bm A_T\in \mathbb{C}^{N_T \times N_T}\) and \(\bm A_R\in \mathbb{C}^{N_R \times N_R}\) are orthogonal DFT matrices, \(\bm \tilde{\bm H}^d\) contains the virtual channel coefficients and is assumed to be sparse. Assume the \(i\)-th row \(\tilde{\bm H}_{i\cdot}^d = \mathbf{0}_{1\times N_R}\), then the whole \(i\)-th row of the combined matrix \(\tilde{\bm H}^d \bm A_R^H\) (act similarly as \(\bm B^d\)) is zero, implying the row sparsity of the matrix \(\tilde{\bm H}^d \bm A_R^H\). The downlink training can be written as
\begin{equation}\label{equ: downlink compressed training MMV}
\bm Y^d = \bm A \bm H^d +\bm W^d = \bm A \bm D^d \bm B^d + \bm W^d
\end{equation}
where \(\bm Y^d\in \mathbb{C}^{T^d\times N_R}\). With respect to the row sparsity of \(\bm B^d\), we cast the channel estimation into solving a multiple measurement vector (MMV) problem \cite{cotter2005sparse,tropp2006algorithms,wipf2007empirical,zhang2011sparse} such as
\begin{equation}\label{equ: MMV minimization}
\begin{aligned}
&\hat{\bm B}^d=\arg\min_{\bm B^d} \|\bm B^d\|_{1,2} \enspace \text{subject to} \enspace \|\bm Y^d -\bm A\bm D^d\bm B^d\|_F\le \epsilon,\\
\end{aligned}
\end{equation}
where \(\|\bm B^d\|_{1,2}=\sum_{i=1}^M \|\bm B^d_{i\cdot}\|_2\), i.e., the summation of the \(\ell_2\) norm of each row in \(\bm B^d\). The estimated channel is given by \(\hat{\bm H}^d_{CS}=\bm D^d \hat{\bm B}^d\). For the sparse recovery, it has been shown that utilizing the row sparse property in the MMV formulation can achieve better recovery performance compared to the single measurement vector (SMV) \cite{cotter2005sparse,tropp2006algorithms,wipf2007empirical,zhang2011sparse}. In Section \ref{section: simulation}, we show the improved channel estimation performance when having multiple antennas at the UE. 

\subsection{Sparse Recovery Based Uplink Channel Estimation}
\label{subsection: uplink channel estimation}
In contrast to the downlink channel estimation, uplink channel estimation is relatively easy in a Massive MIMO system.
With the same assumption of \(N\) antennas at the BS and a single antenna at the UE, for \(K\) UEs the uplink training can be written as
\begin{equation}\label{equ: uplink training}
\begin{aligned}
\bm Y^u
&= \sum_{k=1}^K \bm h^u_k \sqrt{\rho^u_k T^u} \bm s_k^T + \bm W^u = \bm H^u \bm C\bm S +\bm W^u
\end{aligned}
\end{equation}
where \(\bm H^u = [\bm h^u_1,\ldots, \bm h^u_K]\in\mathbb{C}^{N\times K}\) is the uplink channel for \(K\) UEs, \(\bm Y^u\in\mathbb{C}^{N\times T^u}\) denotes the received signal at the base station and \(\bm W^u\in\mathbb{C}^{N\times T^u}\) is the received noise whose elements are assumed to be i.i.d Gaussian with zero mean and unit variance.
\(\bm S=[\bm s_1,\ldots,\bm s_K]^T\in\mathbb{C}^{K\times T^u}\) denotes the uplink pilots during training period \(T^u\), where \(\|\bm s_k\|_2^2=1\). \(\rho^u_k\) denotes the uplink training SNR for the \(k\)-th UE, which incorporates the transmit power, path loss and shadow fading, and is assumed to change slowly and known \textit{a priori}. \(\bm C = \text{diag}(\sqrt{\rho^u_1 T^u},\ldots, \sqrt{\rho^u_K T^u})\).
Using LS channel estimation, we have
\begin{equation}\label{equ: LS uplink}
\hat{\bm H}^u_{LS} = \bm Y^u(\bm C\bm S)^{\dagger}.
\end{equation}
For the robust estimation, we only require \(T^u\ge K\), i.e., the number of pilots to be greater than the number of users. In Massive MIMO systems, it is common to assume the number of users is much smaller than the number of antennas. Comparing to \(T^d\ge N\) for the downlink estimation,  the uplink channel estimation task is simpler. Moreover, the uplink channel is estimated at the BS, incurring no feedback overhead. 
In the following,  we show that the uplink channel can  be accurately estimated even when \(T^u <K\) by casting the channel estimation problem into a sparse recovery problem.

Assume each UE's channel \(\bm h^u_k=\bm D^u \bm \beta^u_k,\forall k\), where \(\bm D^u\in\mathbb{C}^{N\times M}\) is the \emph{sparsifying matrix} and \(\|\bm \beta^u_k \|_0\ll N\). Denoting \(\bm B^u = [\bm \beta^u_1,\ldots,\bm \beta^u_K]\), (\ref{equ: uplink training}) can be written as \(\bm Y^u= \bm H^u \bm C \bm S +\bm W^u = \bm D^u \bm B^u \bm C\bm S + \bm W^u\). Let \(\bm y^u=\textbf{vec}(\bm Y^u)\), we have
\begin{equation}\label{equ: uplink sparse reocvery}
\begin{aligned}
\bm y^u
&=(\bm S^T\otimes \bm D^u)\textbf{vec}(\bm B^u \bm C) + \textbf{vec}(\bm W^u)
=\bm E\bm b^u + \bm w^u\\
\end{aligned}
\end{equation}
where \(\bm E=\bm S^T\otimes\bm D^u\in\mathbb{C}^{NT^u\times MK}\) denotes the equivalent sparsifying matrix, \(\bm b^u =\textbf{vec}(\bm B^u \bm C) = \big[(\sqrt{\rho^u_1 T^u} \bm \beta_1^u)^T,\ldots,(\sqrt{\rho^u_K T^u}\bm \beta_K^u)^T \big]^T\) is the concatenated sparse coefficients. If \(\bm b^u\) is a sparse vector, i.e., \(\|\bm b^u\|_0 =\sum_{k=1}^K\|\bm \beta_k^u\|_0\ll NT^u\) then we can robustly estimate \(\bm b^u\) by many sparse recovery algorithms even when \(T^u<K\). Once \(\bm b^u\) is estimated, the uplink channel for the user \(k\) is given by \(\hat{\bm h}^u_k = \bm D^u \bm \beta^u_k\) since \(\rho^u_k\) is assumed to be known. Notice that \(T^u<K\) means the number of pilots is less than the number of users, which is underdetermined if using LS channel estimation. We denote the formulation in (\ref{equ: uplink sparse reocvery}) as the \emph{sparse recovery based channel estimation}, in contrast to the downlink compressive sensing based channel estimation (\ref{equ: downlink compressed training}) since there are no compressed measurements in the uplink.

In order to apply the sparse recovery algorithm, columns in \(\bm E\) are expected to be incoherent to each other, since two closely related columns may confuse any sparse recovery algorithm. Moreover, denoting \(\Lambda=\text{supp}(\bm b^u)\) where \(|\Lambda|=\|\bm b^u\|_0<NT^u\), given \(\Lambda\) \emph{a priori} the sparse recovery problem in (\ref{equ: uplink sparse reocvery}) reduces to \(\bm y^u = \bm E_{\Lambda}\bm b^u_{\Lambda} +\bm w^u\)
which can be solved by LS estimation. In this case, \(\bm E_{\Lambda}\) is required to be a well conditioned matrix for the robust LS estimation. To summary, we hope columns in \(\bm E\) and \(\bm E_{\Lambda}\) to be as uncorrelated to each other as possible. In the following, we show how to decrease the correlation of columns in \(\bm E\) and \(\bm E_{\Lambda}\) by designing uplink training pilots \(\bm S\) and performing uplink user scheduling.

To quantitatively characterize the correlation between columns in a matrix \(\bm X\),
we utilize the \emph{mutual coherence}
\footnote{Several other measures, e.g., null sparse property (NSP), restricted isometry property (RIP), etc., can provide better characterization of the geometry of a matrix. However those measures are difficult to evaluate explicitly \cite{foucart2013mathematical}.}
 \cite{foucart2013mathematical},
which is defined as the largest absolute and normalized inner product between different columns. Formally,
\begin{equation}\label{equ: mutual coherence}
\mu\{\bm X\} = \max_{i\neq j}\frac{|\bm X_{\cdot i}^H \bm X_{\cdot j}|}{\|\bm X_{\cdot i}\|\cdot \|\bm X_{\cdot j}\|}.
\end{equation}
The mutual coherence provides a measure of the worst similarity between the columns of \(\bm X\), which motivates us to minimize \(\mu\{\bm E\}\) and \(\mu\{\bm E_{\Lambda}\}\) to obtain a matrix with uncorrelated columns. Following this intuition, we first consider \(\mu\{\bm E\}\), which is described in the following theorem \cite{jokar2009sparse}
\footnote{Theorem \ref{theorem: muE} has been proved in \cite{jokar2009sparse}. Since we use slightly different steps in the proof, and the steps are also needed in the proof of Corollary \ref{corollary: muE_Lambda}, we include the proof steps for clarity.}
:
\begin{theorem}[\cite{jokar2009sparse}]
\label{theorem: muE}
Given \(\bm E = \bm S^T\otimes\bm D^u\) and the mutual coherence defined in (\ref{equ: mutual coherence}), \(\mu\{\bm E\}= \text{max}\{\mu\{\bm S^T\},\mu\{\bm D^u\}\}\).
\end{theorem}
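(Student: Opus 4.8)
The plan is to exploit the column structure of a Kronecker product together with its mixed-product property. Writing \(\bm S^T \in \mathbb{C}^{T^u \times K}\) and \(\bm D^u \in \mathbb{C}^{N \times M}\), every column of \(\bm E = \bm S^T \otimes \bm D^u\) has the form \(\bm S^T_{\cdot i} \otimes \bm D^u_{\cdot j}\) for some \(i \in \{1,\ldots,K\}\) and \(j \in \{1,\ldots,M\}\), so I would index the columns of \(\bm E\) by pairs \((i,j)\) and reduce the computation of \(\mu\{\bm E\}\) to a maximization over such pairs.

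First I would compute a generic normalized inner product between two columns indexed by \((i,j)\) and \((k,l)\). Using the identity \((\bm a \otimes \bm b)^H(\bm c \otimes \bm d) = (\bm a^H \bm c)(\bm b^H \bm d)\) for the numerator and \(\|\bm a \otimes \bm b\| = \|\bm a\|\,\|\bm b\|\) for the denominator, the normalized inner product factorizes as
\[
\frac{|\bm E_{\cdot(i,j)}^H \bm E_{\cdot(k,l)}|}{\|\bm E_{\cdot(i,j)}\|\,\|\bm E_{\cdot(k,l)}\|}
= \frac{|(\bm S^T_{\cdot i})^H \bm S^T_{\cdot k}|}{\|\bm S^T_{\cdot i}\|\,\|\bm S^T_{\cdot k}\|}\cdot
\frac{|(\bm D^u_{\cdot j})^H \bm D^u_{\cdot l}|}{\|\bm D^u_{\cdot j}\|\,\|\bm D^u_{\cdot l}\|},
\]
i.e.\ a product of the normalized inner product of two columns of \(\bm S^T\) and that of two columns of \(\bm D^u\).

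Then I would split the maximization over \((i,j)\neq(k,l)\) into three cases: (i) \(i=k,\ j\neq l\), where the first factor equals \(1\) and the maximum of the second is exactly \(\mu\{\bm D^u\}\); (ii) \(i\neq k,\ j=l\), which symmetrically yields \(\mu\{\bm S^T\}\); and (iii) \(i\neq k,\ j\neq l\), where each factor is bounded by the respective coherence, so the value is at most \(\mu\{\bm S^T\}\cdot\mu\{\bm D^u\}\). The observation that closes the argument is that mutual coherence always lies in \([0,1]\), so the product in case (iii) never exceeds \(\max\{\mu\{\bm S^T\},\mu\{\bm D^u\}\}\), which is already attained in cases (i)--(ii); hence \(\mu\{\bm E\}=\max\{\mu\{\bm S^T\},\mu\{\bm D^u\}\}\).

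The routine parts are the two Kronecker identities for the inner product and the norm. The only place needing care is the case split, in particular verifying that the ``cross'' case (iii) cannot dominate: this is precisely where the boundedness \(\mu\{\cdot\}\le 1\) is essential, and it is the step I would state explicitly rather than leave implicit.
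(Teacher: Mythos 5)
Your proposal is correct and follows essentially the same route as the paper's own proof: factorizing each normalized inner product of Kronecker columns via \((\bm a\otimes\bm b)^H(\bm c\otimes\bm d)=(\bm a^H\bm c)(\bm b^H\bm d)\) and \(\|\bm a\otimes\bm b\|=\|\bm a\|\,\|\bm b\|\), splitting into the same three cases, and invoking \(\mu\{\cdot\}\le 1\) to rule out the cross case. Your explicit remark that the cross case is merely bounded by \(\mu\{\bm S^T\}\mu\{\bm D^u\}\) while the other two cases attain their values is a slightly more careful phrasing of the identical argument.
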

\begin{proof}
To simplify the notation, denote \(\bm d_i=\bm D^u_{i}, \bm d_j =\bm D^u_{j}\), and \(\bm e_{li}=\bm E_{[(l-1)M+i]}=\bm s_l\otimes \bm d_i,\bm e_{kj}=\bm E_{[(k-1)M+j]}=\bm s_k\otimes \bm d_j,1\le i,j\le M, 1\le l,k\le K\).
Then we have \(\|\bm e_{li}\|_2^2=\bm e_{li}^H \bm e_{li}=\bm (\bm s_l\otimes \bm d_i)^H(\bm s_l\otimes \bm d_i)=(\bm s_l^H \bm s_l)\otimes(\bm d_i^H \bm d_i)=\|\bm s_l\|_2^2\|\bm d_i\|_2^2\).
So \(\|\bm e_{li}\|_2=\|\bm s_l\|_2\|\bm d_i\|_2\), and \(\|\bm e_{kj}\|_2=\|\bm s_k\|_2\|\bm d_j\|_2\). Similarly, \(|\bm e_{li}^H\bm e_{kj}|=|(\bm s_l\otimes \bm d_i)^H(\bm s_k\otimes \bm d_j)| = |(\bm s_l^H\bm s_k)\otimes(\bm d_i^H \bm d_j)| = |\bm s_l^H\bm s_k| |\bm d_i^H \bm d_j|\). According to (\ref{equ: mutual coherence}), the mutual coherence can be written as
\begin{equation}\label{equ: mutual e}
\begin{aligned}
\mu\{\bm E\}
&= \max_{\substack{(l-1)M+i\neq (k-1)M+j \\ 1\le l,k\le K,1\le i,j\le M}}\frac{|\bm e_{li}^H \bm e_{kj}|}{\|\bm e_{li}\|_2\cdot \|\bm e_{kj}\|_2} \\
&= \max_{\substack{(l-1)M+i\neq (k-1)M+j \\ 1\le l,k\le K,1\le i,j\le M}}\frac{|\bm s_l^H\bm s_k| |\bm d_i^H \bm d_j|}{\|\bm s_l\|_2\|\bm s_k\|_2\|\bm d_i\|_2\|\bm d_j\|_2}\\
&= \left\{
     \begin{array}{ll}
       \mu\{\bm S^T\}\mu\{\bm D^u\}, & \hbox{\(i\neq j,l\neq k\);} \\
       \mu\{\bm D^u\}, & \hbox{\(i\neq j, l=k\);} \\
       \mu\{\bm S^T\}, & \hbox{\(i=j, l\neq k\).}
     \end{array}
   \right.
\end{aligned}
\end{equation}
Notice that the mutual coherence is always smaller or equal to 1, i.e., \(\mu\{\bm S^T\}\le 1, \mu\{\bm D^u\}\le 1\). So we have \(\mu\{\bm E\}= \text{max}\{\mu\{\bm S^T\},\mu\{\bm D^u\}\}\) for \(1\le l,k\le K,1\le i,j\le M\).
\end{proof}
Theorem \ref{theorem: muE} indicates that to minimize \(\mu\{\bm E\}\), the larger one of \(\mu\{\bm S^T\}\) and \(\mu\{\bm D^u\}\) needs to be minimized.
Notice that \(\bm D^u\) is the sparsifying matrix which models the channel, and it has been designed before the channel estimation (see details in Section \ref{section: dictionary learning}). So during the channel estimation phase, \(\mu\{\bm D^u\}\) is fixed and could be small depending on which \(\bm D^u\) is used. The only way we can minimize \(\mu\{\bm E\}\) is by minimizing \(\mu\{\bm S^T\}\), which corresponds to \emph{design uplink pilots} such that \(\mu\{\bm S^T\}\) is small. We discuss different situations regarding to the length of the uplink training duration \(T^u\):

\subsubsection{\(T^u=1\)} when \(T^u=1\), \(\bm S^T=[s_1,\ldots, s_K]\in\mathbb{C}^{1\times K}\), so \(\mu\{\bm S^T\}=1\) for any \(\bm S^T\). This is the worst case since even we pick sparsifying matrix \(\bm D^u\) such that \(\mu\{\bm D^u\}=0\), we still have \(\mu\{\bm E\}=1\), i.e., there exist fully correlated columns.
No sparse recovery algorithm can succeed in this situation.
\subsubsection{\(T^u\ge K\)} when \(T^u\ge K\) we have \(\text{min}_{\bm S^T}\mu\{\bm S^T\}=0\), where the optimal \(\bm S^T\) has orthogonal columns, i.e., \(\bm s_l^H \bm s_k=0, \forall l\neq k\). So the optimal uplink pilots design is \(\bm S^*\bm S^T=\textbf{I}_K\). The orthogonal pilots among users in the same cell are typically assumed for the uplink channel estimation in multiuser Massive MIMO systems \cite{marzetta2006much, hoydis2013massive}.
\subsubsection{\(1<T^u<K\)} when \(1<T^u<K\), \(\bm S^T\in\mathbb{C}^{T^u\times K}\) is an overcomplete matrix. The famous welch bound indicates that
\begin{equation}\label{equ: welch bound}
\mu\{\bm S^T\} \ge \sqrt{\frac{K-T^u}{T^u(K-1)}}
\end{equation}
where equality holds if and only if \(\bm S^T = [\bm s_1,\ldots, \bm s_K]\) forms an \emph{equiangular tight frame} \cite{foucart2013mathematical}. Unfortunately, equiangular tight frame does not exist for any pair \(\{T^u, K\}\). In \cite{strohmer2003grassmannian}, the solution \(\bm S^T\) to the problem \(\text{min}_{\bm S^T}\mu\{\bm S^T\}\) is called \emph{Grassmannian frame}, and explicit construction of Grassmannian frame has been provided for some specific pairs \(\{T^u, K\}\). In general, the design of Grassmannian frames is challenging. Not only is the associated optimization problem difficult, but there is no general procedure for deciding when a frame solves the optimization problem unless it meets the Welch bound \cite{tropp2005designing}. In this paper, we design \(\bm S^T\) following the algorithm proposed in \cite{elad2007optimized} which targets an average measure of the mutual coherence. The algorithm calculates the Gram matrix of \(\bm S^T\) as \(\bm G=\bm S^*\bm S^T\), and set the average mutual coherence \(\mu_{t\%}\{\bm S^T\}\) such that the top \(t\%\) of \(|\bm G_{ij}|\) is greater than \(\mu_{t\%}\{\bm S^T\}\). The algorithm then shrinks those large \(|\bm G_{ij}|\) by some down-scaling factor \(\gamma\) to have \(\tilde{\bm G}_{ij} = \gamma \bm G_{ij}\), and keeps the small ones unchanged. The estimated \(\hat{\bm S}^T\) is the solution of \(\text{min}_{\bm S^T}\|\tilde{\bm G}-\bm S^*\bm S^T\|_F^2\), which is solved by SVD of \(\tilde{\bm G}\). Then new \(\bm G\) is calculated and such procedure is iteratively executed until some stopping rule is satisfied. By iterative shrinkage of those large \(|\bm G_{ij}|\), the \(\mu\{\bm S^T\}\) is also reduced. It has been shown in \cite{elad2007optimized} that the algorithm practically converged and the resulted \(\bm S^T\) can lead to better performance for the sparse recovery problem like (\ref{equ: uplink sparse reocvery}). Interested readers are referred to \cite{tropp2005designing, elad2007optimized} for more details.

Next we consider minimizing \(\mu\{\bm E_{\Lambda}\}\). Denote \(\Lambda_k=\text{supp}(\bm \beta_k)\), then \(\bm E_{\Lambda}\) can be written as
\begin{equation}\label{equ: E_Lambda}
\bm E_{\Lambda}=\left[
                      \begin{array}{cccc}
                        \bm s_1\otimes \bm D_{\Lambda_1}^u & \bm s_2\otimes \bm D_{\Lambda_2}^u & \ldots & \bm s_K\otimes \bm D_{\Lambda_K}^u \\
                      \end{array}
                    \right].
\end{equation}
We take a simple example in the following to see how \(\Lambda_k\) can affect the recovery performance when \(T^u<K\). Assume \(K=2\), and both \(\Lambda_1\) and \(\Lambda_2\) are known \textit{a priori} with \(|\Lambda_1| = |\Lambda_2| =1\). Let \(T^u = 1\), so \(s_1\) and \(s_2\) are scalars. If \(\Lambda_1 \cap \Lambda_2 = \varnothing\), then \(\text{rank}(\bm E_{\Lambda})=2\) and we can robustly recovery \(\bm b^u_{\Lambda}\) from \(\bm y^u = \bm E_{\Lambda}\bm b^u_{\Lambda} +\bm w^u\) when the correlation of \(\bm D_{\Lambda_1}^u\) and \(\bm D_{\Lambda_2}^u\) is small. However, if \(\Lambda_1\) is overlapped with \(\Lambda_2\), which in this example means \(\Lambda_1 = \Lambda_2\), then \(\text{rank}(\bm E_{\Lambda})=1\) and we have \(\bm y^u = \bm D_{\Lambda_1}^u(s_1 b^u_{\Lambda_1} + s_2 b^u_{\Lambda_2} ) +\bm w^u\) making recovery of \(b^u_{\Lambda_1}\) and \(b^u_{\Lambda_2}\) impossible. In this case, \(T^u\ge 2\) is required to estimate \(b^u_{\Lambda_1}\) and \(b^u_{\Lambda_2}\). This example motivates how the non-overlapping supports of different users can help sparse recovery when \(T^u<K\), as formally shown in the following corollary.

\begin{corollary}\label{corollary: muE_Lambda}
Given \(\bm E_{\Lambda}\) in (\ref{equ: E_Lambda}) and the mutual coherence defined in (\ref{equ: mutual coherence}), then \(\mu\{\bm E_{\Lambda}\}= \mu\{\bm D^u\}\) if \(\Lambda_l\cap \Lambda_k = \varnothing, \forall l\neq k\).
\end{corollary}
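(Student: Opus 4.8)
The plan is to reuse the column-wise factorization already established in the proof of Theorem~\ref{theorem: muE} and then exploit the disjoint-support hypothesis to delete the troublesome term. First I would index the columns of $\bm E_\Lambda$ in (\ref{equ: E_Lambda}) directly: every column has the Kronecker form $\bm s_l \otimes \bm d_i$ with $i \in \Lambda_l$, writing as before $\bm d_i = \bm D^u_{i}$. Since we already know $\|\bm s_l \otimes \bm d_i\|_2 = \|\bm s_l\|_2\|\bm d_i\|_2$ and $|(\bm s_l\otimes \bm d_i)^H(\bm s_k \otimes \bm d_j)| = |\bm s_l^H\bm s_k|\,|\bm d_i^H\bm d_j|$, the normalized inner product between any two distinct columns factors as $\frac{|\bm s_l^H \bm s_k|}{\|\bm s_l\|_2\|\bm s_k\|_2}\cdot\frac{|\bm d_i^H \bm d_j|}{\|\bm d_i\|_2\|\bm d_j\|_2}$, exactly as in (\ref{equ: mutual e}).

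The key step is to enumerate the admissible pairs, indexed by $(l,i)$ and $(k,j)$ with $i\in\Lambda_l$, $j\in\Lambda_k$, and $(l,i)\neq(k,j)$, and to split into a within-block case $l=k$ and a cross-block case $l\neq k$. In the within-block case $\bm s_l^H\bm s_l=\|\bm s_l\|_2^2$ (recall $\|\bm s_l\|_2=1$), so the pilot factor equals $1$ and the coherence of the pair is precisely the dictionary coherence $\frac{|\bm d_i^H\bm d_j|}{\|\bm d_i\|_2\|\bm d_j\|_2}\le\mu\{\bm D^u\}$, where $i\neq j$ is forced because the columns are distinct. The crucial use of the hypothesis $\Lambda_l\cap\Lambda_k=\varnothing$ occurs in the cross-block case: disjointness guarantees $i\neq j$, so the dictionary factor is again bounded by $\mu\{\bm D^u\}$ while the pilot factor is at most $1$, whence the whole product is $\le\mu\{\bm D^u\}$. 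In other words the scenario $i=j,\ l\neq k$ of (\ref{equ: mutual e}) — the one that contributed the isolated $\mu\{\bm S^T\}$ term in Theorem~\ref{theorem: muE} — is ruled out entirely. Taking the maximum over all pairs then yields the upper bound $\mu\{\bm E_\Lambda\}\le\mu\{\bm D^u\}$.

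For the reverse inequality I would note that the within-block pairs realize dictionary coherences directly, so the maximum in (\ref{equ: mutual coherence}) restricted to $\bm E_\Lambda$ can attain $\mu\{\bm D^u\}$ through a pair $i,j$ lying in a common support $\Lambda_l$, which together with the upper bound gives the claimed equality. I expect this last point to be the main obstacle to a fully rigorous statement: the cross-block products are strictly below $\mu\{\bm D^u\}$ whenever the pilots are not collinear, so equality genuinely relies on the worst-case column pair of $\bm D^u$ being contained within a single user's support. Under the disjoint-support model this is the intended reading, and I would either adopt it as a standing assumption or, more conservatively, record the operative conclusion as the bound $\mu\{\bm E_\Lambda\}\le\mu\{\bm D^u\}$, which already carries the essential message that non-overlapping supports purge the harmful $\mu\{\bm S^T\}$ dependence present in Theorem~\ref{theorem: muE}.
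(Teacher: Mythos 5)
Your proposal is correct and follows essentially the same route as the paper's proof: it reuses the Kronecker factorization \(|\bm e_{li}^H\bm e_{kj}| = |\bm s_l^H\bm s_k|\,|\bm d_i^H \bm d_j|\) from Theorem~\ref{theorem: muE}, observes that disjoint supports rule out the \(i=j,\ l\neq k\) case that produced the isolated \(\mu\{\bm S^T\}\) term, and concludes via \(\mu\{\bm S^T\}\le 1\). Your reservation about the reverse inequality is also well taken: the paper's one-line proof tacitly equates the coherence over the restricted column pairs with \(\mu\{\bm D^u\}\), so strictly only the bound \(\mu\{\bm E_{\Lambda}\}\le\mu\{\bm D^u\}\) is established unless the maximizing column pair of \(\bm D^u\) happens to lie within a single support set \(\Lambda_l\) --- a looseness your write-up treats more carefully than the paper does, without changing the substance of the argument.
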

\begin{proof}
Following the same steps as in the proof of Theorem \ref{theorem: muE}, then the condition \(\Lambda_l\cap \Lambda_k = \varnothing, \forall l\neq k\) implies \(i\neq j\) for \(\bm e_{li}\) and \(\bm e_{kj}\). So \(\mu\{\bm E_{\Lambda}\}= \text{max}\{\mu\{\bm S^T\}\mu\{\bm D^u\},\mu\{\bm D^u\}\} = \mu\{\bm D^u\}\) since \(\mu\{\bm S^T\}\le 1\) for any \(\bm S^T\).
\end{proof}
Comparing to \(\mu\{\bm E\}\), \(\mu\{\bm E_{\Lambda}\}\) is no longer depending on \(\mu\{\bm S^T\}\) when the support sets of different users are \emph{non-overlapping}. So even if \(\mu\{\bm S^T\}\) is large
\footnote{In the case when the number of users is much larger than the number of pilot symbols, i.e., \(K\gg T^u\), we have \(\mu\{\bm S^T\}\ge 1/\sqrt{T^u}\) from the welch bound in (\ref{equ: welch bound}).},
\(\mu\{\bm E_{\Lambda}\}\) can still be small if \(\mu\{\bm D^u\}\) is small. This result sheds light on how \emph{user scheduling} can affect the performance of channel estimation. If given prior knowledge of \(\Lambda_k = \text{supp}(\bm \beta_k)\) 
\footnote{Such prior knowledge can possibly be obtained using some kind of control information, or from previous estimate of \(\bm \beta_k\) when users are slowly moving.}
, we can schedule users whose supports satisfy \(\Lambda_l\cap \Lambda_k = \varnothing, \forall l\neq k\), which will lead to smaller \(\mu\{\bm E_{\Lambda}\}\) and better channel estimation. 
This result is consistent with \cite{yin2013coordinated}, which shows that in a \textit{multi-cell} network user interference can be eliminated by simple MMSE channel estimation when the AOA of the desired user has no overlap with AOAs of interfering users.  
Interestingly, authors in \cite{rao2014distributed,liu2017closed} suggest to schedule users with overlapped supports for the \textit{downlink} channel estimation, since it can be formulated into a joint sparse recovery problem which exploits the common support information among users.
For the uplink training, in contrast, common support increases mutual coherence when \(T^u<K\), causing decreased performance when applying sparse recovery algorithm for channel estimation.

\section{Dictionary Learning based Channel Model}\label{section: dictionary learning}

\subsection{Predefined Sparsifying Matrix}
\label{subsection: Sparsifying Basis/Dictionary}
In both compressive sensing based downlink channel estimation and sparse recovery based uplink channel estimation, the key assumption is that the channel can be represented in the form of \(\bm h=\bm D\bm \beta\), where \(\bm \beta\) is a sparse vector
\footnote{The concept in this section applies to both downlink and uplink channels, so we drop the superscript \(d\) and \(u\) to simplify the notation.}.
The existing works \cite{bajwa2010compressed, rao2014distributed, sayeed2002deconstructing, tse2005fundamentals, gao2015spatially, shen2016compressed, zhou2016sparse} which consider such a sparse representation typically use a normalized square DFT matrix as the sparsifying matrix when an \emph{uniform linear array} (ULA) is employed, i.e., \(\bm h=\bm F\bm \beta\), where
\begin{equation}\label{equ: square DFT}
\begin{aligned}
&\bm F=\begin{bmatrix}
       \bm f(-\frac{1}{2}) & \bm f(-\frac{1}{2}+\frac{1}{N}) & \ldots & \bm f(\frac{1}{2}-\frac{1}{N}) \\[0.3em]
     \end{bmatrix}\in\mathbb{C}^{N\times N},\\
&\bm f(\psi)=\frac{1}{\sqrt{N}}[
       1 , e^{j2\pi\psi} , \ldots , e^{j2\pi\psi\cdot (N-1)}
     ]^T.\\
\end{aligned}
\end{equation}
Such a model is also known as the ``virtual channel model" which transforms spatial channel response into the angular domain \cite{bajwa2010compressed, sayeed2002deconstructing, tse2005fundamentals}.
Notice that the column \(\bm f(\psi)\) has the same structure as the array response \(\bm a(\theta)\) in (\ref{equ: array response}), and \(\psi\) can be related to \(\theta\) through \(\psi=d\text{sin}(\theta)/\lambda\), indicating the validity of the DFT matrix .
However, in practice signals come from arbitrary directions, so \(\psi=d\text{sin}(\theta)/\lambda\) rarely resides on the DFT bins \(\{-\frac{1}{2}, -\frac{1}{2}+\frac{1}{N}, \ldots, \frac{1}{2}-\frac{1}{N}\}\), leading to the ``leakage" effect. Moreover, as demonstrated in Section \ref{section: system model}, for each scattering cluster the signals' subpaths often span an angular spread, resulting even more leakage. So for practical channels, there will be a lot of nonzero elements in \(\bm \beta\) when we apply the representation \(\bm h=\bm F\bm \beta\), making the sparse assumption invalid.

To achieve a better sparse representation, our first suggestion in the same realm of ``predefined matrix" for ULA is to apply the \emph{overcomplete} DFT matrix \(\tilde{\bm F}\), which has the form
\begin{equation}\label{equ: overcomplete DFT}
\begin{aligned}
&\tilde{\bm F}=\begin{bmatrix}
       \bm f(-\frac{1}{2}) & \bm f(-\frac{1}{2}+\frac{1}{M}) & \ldots & \bm f(\frac{1}{2}-\frac{1}{M}) \\[0.3em]
     \end{bmatrix}\in\mathbb{C}^{N\times M}.\\
\end{aligned}
\end{equation}
The columns of \(\tilde{\bm F}\) has the same structure \(\bm f(\psi)\), but the angular domain is sampled (in the sense of \(\psi\)) more finely, i.e., \(M>N\). The overcomplete DFT matrix introduces redundancy to the square DFT matrix, which improves the flexibility of representing the signal as well as the capability of inducing sparsity. When a URA with \(N_1\) vertical antennas and \(N_2\) horizontal antennas is applied, \(\tilde{\bm F}\) is constructed as the kronecker product of two overcomplete DFT matrix such that
\begin{equation}
\tilde{\bm F} = \tilde{\bm F}_h \otimes \tilde{\bm F}_v
\end{equation}
where \(\tilde{\bm F}_h\) and \(\tilde{\bm F}_v\) are \(N_2\times M_2\) and \(N_1\times M_1\) overcomplete DFT matrices as given in (\ref{equ: overcomplete DFT}).
In Section \ref{section: simulation}, we show experimentally how this simple extension to the overcomplete DFT matrix can greatly improve the performance.

Although the overcomplete DFT matrix can alleviate the leakage effect to some extent, both \(\bm F\) and \(\tilde{\bm F}\) suffer from performance loss due to their inability to adapt to the real channels. Firstly, since \(\tilde{\bm F}\) and \(\bm F\) are predefined and independent of the specific cell properties, they lose the ability to more \emph{efficiently} represent the channel by exploring \emph{cell specific } characteristics. For example both \(\tilde{\bm F}\) and \(\bm F\) uniformly sample the \(\psi\) domain, but for a specific cell it is possible that no signals may be received from some directions, then the columns in \(\tilde{\bm F}\) and \(\bm F\) corresponding to those directions will never be used. On the other hand, for directions corresponding to those fixed location scattering clusters, finer angular sampling can lead to a reduced leakage. Since those fixed location scattering clusters can be seen by many different users, such finer sampling can lead to more sparse representation for many users. Secondly, predefined matrices also lose the ability to \emph{robustly} represent the channel. They assume ideal mathematical models of channel responses, e.g., far-field plane wave, equal antenna gain and antenna spacing, etc., which are not robust to any propagation model mismatch or antenna array uncertainty.

\subsection{Dictionary Learning}
In this paper, we propose a \emph{dictionary learning based channel model} (DLCM) which learns an overcomplete dictionary.
During the learning process, the sparse representation is encouraged by the optimization function.
Furthermore, the dictionary learning process adapts the channel model to the channel measurements collected in the cell, which contain the specific cell characteristics\footnote{The channel measurements describe the effect of scattering clusters on the transmitted electromagnetic waves and antenna array. The underlying structure of channel measurements collected in a specific cell can reflect the cell specific properties regarding to both scattering clusters and the antenna array.}.
Notice that when the knowledge of the underlying physical generation scheme of the channel is imperfect or even incorrect, for example, antenna gains and locations are different from the nominal values, or there exist near-field scattering clusters, the predefined matrix is no longer accurate and may cause severe performance degradation. However, the learned dictionary does not have any predefined structural constraints and is able to tune its own structure to adapt to the channel measurements, which leads to a more robust channel representation.
The insight behind the sparse dictionary learning is that the high dimensional data (channel response in our case) usually has some structure correlated in some dimensions, and the true degrees of freedom that generate the data is usually small. So by learning from large amount of data, we are able to recover useful underlying structures or models, which make the representation of the data more efficient for the desired application. In our situation, one could view this as big data analytics applied to the physical layer.

From now, we denote \(\bm D\in\mathbb{C}^{N\times M}\) as the learned dictionary from channel measurements. To benefit from the flexibility of overcompleteness, we let \(N<M\). Assuming we collect \(L\) channel measurements as the training samples in a specific cell, the goal is to learn \(\bm D\) such that for all the channel responses \(\bm h_i, i=1,\ldots,L,\) they can be approximated as \(\bm h_i\approx \bm D\bm \beta_i\). The algorithms should be able to address both model fitting \(\|\bm h_i-\bm D\bm \beta_i\|_2\) (robustness), and encourage small \(\|\bm \beta_i\|_0\) (efficiency) for the sparse representation. If we constrain the model mismatch error of each channel response to be bounded by \(\eta\), then the dictionary learning can be formulated as
\begin{equation}\label{equ: DL_BPl0}
\begin{aligned}
&\min_{\substack{\bm D\in\mathcal{C}\\ \bm\beta_1,\ldots,\bm\beta_L }} \frac{1}{L}\sum_{i=1}^L\|\bm\beta_i\|_0\quad
\text{subject to }\|\bm h_i-\bm D\bm \beta_i\|_2\le \eta, \forall i
\end{aligned}
\end{equation}
where the constraint set \(\mathcal{C}\) is defined as
\begin{equation}
\mathcal{C}=\{\bm D\in\mathbb{C}^{N\times M}, \text{ s.t. } \|\bm D_{\cdot j}\|_2\le 1,\forall j=1,\ldots,M \}
\end{equation}
in order to prevent the scaling ambiguity between columns of \(\bm D\) and corresponding elements in \(\bm \beta\).
The solved \(\bm D\) in (\ref{equ: DL_BPl0}) leads to the sparsest representation in the sense of representing all collected channel measurements within the model mismatch tolerance \(\eta\).

To solve the dictionary learning problems (\ref{equ: DL_BPl0}), block coordinate descent framework has been applied where each iteration includes alternatively minimizing with respect to either \(\bm D\) or \(\bm \beta_i, \forall i\), while keeping the other fixed \cite{engan1999method, kreutz2003dictionary, aharon2006svd}. When \(\bm D\) is fixed, optimizing \(\bm \beta_i, \forall i\) is decoupled and each of them is a sparse recovery problem, which can be solved by any sparse recovery algorithm. When we fix \(\bm \beta_i, \forall i\) and solve for \(\bm D\), many dictionary learning algorithm can be applied \cite{engan1999method, kreutz2003dictionary, aharon2006svd}. The convergence of the iteration depends on the specific sparse recovery algorithm and dictionary update algorithm, and to the best of our knowledge, no general guarantees have been provided. Interested readers are referred to \cite{tseng2001convergence, aharon2006svd} for some discussion about the convergence under specific assumptions. Notice that in our scenario, there exists \textit{no} ``true" dictionary that generates the channel. Because each channel response combines signals coming from both fixed location scattering clusters and user location dependent scattering clusters, where the latter depends on arbitrary user's location. So the goal of the dictionary learning here is not to identify any true dictionary \cite{remi2010dictionary}, but to find an efficient and robust channel representation. For the purpose of this paper, we show experimentally in Section \ref{section: simulation} that the learned dictionary can improve the performance in terms of both sparse representation and channel estimation.

\subsection{Discussion}
We make some comments relative to the practical implementation of the dictionary learning process. To learn a comprehensive dictionary for users located in any place of the cell, we need to collect channel measurements from all locations in a specific cell, i.e., \emph{cell specific samples}, based on an extensive measurement campaign. The learned dictionaries will only be used for this specific cell. At this stage there is not much concern about reducing training and feedback overhead and one would like to collect channel measurements as accurately and as extensively as possible, since large amount of channel samples will prevent the learning algorithm from overfitting. For example, one can perform conventional channel estimation using more training pilots, larger transmitted power and more sophisticated equipment. Fortunately, such channel measurements collection and dictionary learning process is \emph{offline}, and assumed to be done at the cell deployment stage. Due to the non-convex learning process, it is possible that the learned dictionary converges to local optima. Starting from a reasonably good initial point, for example an overcomplete DFT matrix, can help avoid such local optima and promote quicker convergence.

The learned dictionary \(\bm D^d\) and \(\bm D^u\) are stored at the BS for use during downlink and uplink channel estimation. It is straightforward for the uplink since the channel estimation is performed at the BS, and we can directly applied \(\bm D^u\) in (\ref{equ: uplink sparse reocvery}).
In downlink channel estimation, users feed back the received measurements \(\bm y^d\) to the BS and the channel is estimated at the BS using (\ref{equ: l0 minimization}) with the learned dictionary \(\bm D^d\). This is different from the conventional channel estimation where users estimate the channel and feed back the channel state information to the base station. The scheme of feeding back \(\bm y^d\) has been proposed in previous works \cite{rao2014distributed, gao2015spatially, zhou2016sparse, liu2017closed}, which has several advantages: firstly the sparse recovery algorithms (channel estimation) can be complex so it is preferably done at the BS thus saving energy for UE. Secondly, \(\bm y^d\) has dimension \(T^d\) which is much less than the channel dimension \(N\) in Massive MIMO system, so it also reduces feedback overhead which is now only proportional to the channel sparsity level. Furthermore, for the downlink channel estimation, making the learned dictionary available to all users involves significant overhead in storage at UE and also conveyance of dictionary. By feeding back \(\bm y^d\) only the BS needs to know the dictionaries. In this work, perfect uplink feedback is assumed for simplicity.

\section{Uplink/Downlink Joint Dictionary Learning}\label{section: joint dictionary learing}

\subsection{Motivation}
In compressive sensing based downlink channel estimation, larger training period \(T^d\) leads to better recovery performance since more information about the downlink channel is collected. However, larger \(T^d\) also means more downlink resources for channel estimation and leaves less time for actual data transmission. This motivates our search for alternative information sources that can facilitate downlink channel estimation.
For this we draw inspiration from TDD systems,
where through channel reciprocity the uplink channel estimate provides  downlink channel information \cite{marzetta2006much, hoydis2013massive}.
In FDD system we do not have such channel reciprocity because uplink and downlink transmission are operated in different frequency bands. However, if the duplex distance is not large, i.e., the frequency difference between the uplink and the downlink is not large, a looser and more abstract form of reciprocity is possible and appropriate. For instance, it is reasonable to assume the AOA of signals in the uplink transmission is the same as the AOD of signals in the downlink transmission \cite{zetterberg1995spectrum, paulraj1997space,  hugl2002spatial,hugl1999downlink,3gpp.25.996}.
In other words directions of signal paths are invariant to carrier frequency shift. In \cite{hugl2002spatial}, congruence of the directional properties of the uplink and the downlink channel is observed experimentally, where the dominant uplink/downlink directions of arrival (DOA) show only a minor deviation, and the uplink/downlink azimuth power spectrums (APS) have a high correlation.
This indicates that in the spatial channel model (\ref{equ: SCM_downlink}),
\(\alpha_{il}\) in the uplink is different and uncorrelated from \(\alpha_{il}\) in the downlink due to the frequency separation, but both links share the same \(N_c, N_s\) and \(\Omega_{il}\). So when we treat \(\bm h^u\) and \(\bm h^d\) as a whole, they appear to be uncorrelated. But if we are able to resolve them finely in the angular domain, which indeed can be achieved by the large antenna array, they will show the common spatial structure which can be regarded as the reciprocity in the angular domain.
Furthermore, the directions in the angular domain are closely related to the locations of nonzero entries in the sparse coefficients. So the reciprocity in the angular domain translates to the same locations of nonzero entries in \(\bm \beta^u\) and \(\bm \beta^d\), i.e., \(\text{supp}(\bm \beta^u)=\text{supp}(\bm \beta^d)\). Consequently, if we know \(\bm h^u\), and utilize for the downlink channel estimation  the common support information \(\text{supp}(\bm \beta^u)=\text{supp}(\bm \beta^d)\), we have critical information about \(\bm h^d\) and can obtain better downlink channel estimates without increasing the training overhead.

\subsection{Joint Dictionary Learning}
\label{subsection: Upllink Donwlink joint dictionary learning}
Based on the DLCM in the previous section, we propose a \emph{joint} dictionary learning process where \(\bm D^u\) and \(\bm D^d\) are learned jointly with the constraint on the support, i.e., \(\text{supp}(\bm \beta^u)=\text{supp}(\bm \beta^d)\). In order to enforce such constraint, we collect channel samples \(\{\bm h^u_i, \bm h^d_i\}\) in pair. Each pair of samples is measured at the same UE location, so the assumption of the same AOA/AOD is valid. The joint dictionary learning can be formulated as
\begin{equation}\label{equ: JDL_BPl0}
\begin{aligned}
&\min_{\substack{\bm D^u\in\mathcal{C}, \bm\beta_1^u,\ldots,\bm\beta_L^u\\
\bm D^d\in\mathcal{C}, \bm\beta_1^d,\ldots,\bm\beta_L^d }} \enspace \frac{1}{L}\sum_{i=1}^L\|\bm\beta_i^u\|_0+\|\bm\beta_i^d\|_0\\
&\text{subject to} \quad \|\bm h_i^u-\bm D^u\bm \beta_i^u\|_2\le \eta^u, \enspace \|\bm h_i^d-\bm D^d\bm \beta_i^d\|_2\le \eta^d,\\
&\quad \quad \quad \quad \quad  \text{supp}(\bm \beta_i^u)=\text{supp}(\bm \beta_i^d), \enspace \forall i
\end{aligned}
\end{equation}
which is very similar to the dictionary learning problem as shown in (\ref{equ: DL_BPl0}), except for the constraint \(\text{supp}(\bm \beta_i^u)=\text{supp}(\bm \beta_i^d)\). This constraint is important since it builds the connection between the uplink and downlink channel responses, which will be utilized in the joint channel estimation. 

To solve the joint dictionary learning, we minimize (\ref{equ: JDL_BPl0}) iteratively, i.e., we fix \(\bm D^u, \bm D^d\) and solve for \(\bm\beta_i^u, \bm \beta_i^d,\forall i\), and then fix \(\bm\beta_i^u, \bm \beta_i^d,\forall i\) and solve for \(\bm D^u, \bm D^d\). Notice that when \(\bm\beta_i^u, \bm \beta_i^d,\forall i\) are fixed, the solution of \(\bm D^u\) and \(\bm D^d\) are decoupled, and can be optimized independently using any of dictionary learning algorithms \cite{engan1999method, kreutz2003dictionary, aharon2006svd}. When \(\bm D^u, \bm D^d\) are fixed, different pairs of \(\{\bm \beta^u_i, \bm \beta^d_i\}\) are decoupled. But for each of the pair, they are coupled through the constraint \(\text{supp}(\bm \beta_i^u)=\text{supp}(\bm \beta_i^d)\), and need to be solved jointly. Algorithms aiming to solve joint sparse recovery have been proposed in previous works,
such as OMP like algorithm
\cite{baron2005distributed}, \(\ell_1\) norm algorithm \cite{yuan2006model}, reweighted \(\ell_p\) norm algorithm  \cite{ding2015joint} and sparse Bayesian learning algorithm \cite{ji2009multitask}. It has been shown that joint recovery can lead to more accurate results compared to  independent recovery.
In this paper, we consider a group \(\ell_1\) formulation which is similar to the group-lasso in \cite{yuan2006model} to solve the joint sparse recovery problem. By forming
\begin{equation}
\bm h=\left[
        \begin{array}{c}
          \bm h^d_i \\
          \bm h^u_i \\
        \end{array}
      \right],
\bm \beta=\left[
           \begin{array}{c}
             \bm \beta^d_i \\
             \bm \beta^u_i \\
           \end{array}
         \right],
\bm G=\left[
        \begin{array}{cc}
          \bm D^d & \textbf{0}_{N\times M} \\
          \textbf{0}_{N\times M} & \bm D^u \\
        \end{array}
      \right]
\end{equation}
the joint sparse recovery of \(\bm \beta^d_i, \bm \beta^u_i\) can be written as
\begin{equation}\label{equ: groupbpdn_h}
\begin{aligned}
&\min_{\bm \beta}\sum_{j=1}^M\|\bm\beta\|_{\bm K_j}\quad
\text{subject to } \|\bm h-\bm G\bm \beta\|_2\le\eta
\end{aligned}
\end{equation}
where \(\|\bm\beta\|_{\bm K_j}=(\bm \beta^H \bm K_j\bm\beta)^{1/2}\), \(\bm K_j=\text{diag}([\bm e_j^T \enspace \bm e_j^T]^T)\) is the group kernel, where \(\bm e_j\in\mathbb{R}^{M\times 1}\) is the standard basis with \(1\) in the \(j\)-th location and \(0\) elsewhere. The group kernel gathers the \(j\)-th element in \(\bm \beta_i^d\) and the \(j\)-th element in \(\bm \beta_i^u\) into the same group.  
The cost function in (\ref{equ: groupbpdn_h}) is a \(\ell_2/\ell_1\) norm of \(\bm \beta\) similar to the \(\ell_p/\ell_1\) norm in \cite{cotter2005sparse}, which encourages all the elements in the same group to be zero or nonzero simultaneously, and the total number of nonzero groups to be small. By applying this group \(\ell_1\) framework, we enforce the constraint of \(\text{supp}(\bm \beta_i^u)=\text{supp}(\bm \beta_i^d)\) and encourage a sparse representation.

\subsection{Joint Channel Estimation}
After learning \(\bm D^u, \bm D^d\), we have the joint uplink and downlink sparse channel representation as \(\bm h^u\approx \bm D^u\bm \beta^u\) and \(\bm h^d\approx \bm D^d\bm\beta^d\). The goal is to utilize uplink training to help improving the performance of the downlink channel estimation, by using the constraint \(\text{supp}(\bm \beta^u)=\text{supp}(\bm \beta^d)\).
Consider the single user case in (\ref{equ: uplink training}), we have \(\bm Y^u = \bm h^u \sqrt{\rho^u T^u} \bm s^T +\bm W^u\). Denote \(\bm y^u = \bm Y^u (\sqrt{\rho^u T^u}\bm s^T)^{\dagger}\) and \(\bm w^u = \bm W^u (\sqrt{\rho^u T^u}\bm s^T)^{\dagger}\), we have
\begin{equation}
\bm y^u = \bm h^u +\bm w^u = \bm D^u\bm \beta^u +\bm w^u.
\end{equation}
Combined with the downlink training in (\ref{equ: downlink compressed training}), the compressed channel estimation can be formulated as
\begin{equation}
\begin{aligned}
&\{\hat{\bm \beta}^u,\hat{\bm \beta}^d\}=\arg\min_{\bm \beta^u, \bm \beta^d} \|\bm \beta^u\|_0+\|\bm \beta^d\|_0 \\
&\text{subject to}\quad \|\bm y^u -\bm D^u\bm \beta^u\|_2\le \epsilon^u, \enspace
\|\bm y^d -\bm A\bm D^d\bm \beta^d\|_2\le \epsilon^d,\\
&\quad\quad\quad\quad\quad \text{supp}(\bm \beta^u)=\text{supp}(\bm \beta^d).
\end{aligned}
\end{equation}
And the uplink and downlink channel can be estimated as \(\hat{\bm h}^u=\bm D^u\hat{\bm \beta}^u\) and \(\hat{\bm h}^d=\bm D^d\hat{\bm \beta}^d\). Again, we face the same joint sparse recovery problem with structure constraint \(\text{supp}(\bm \beta^u)=\text{supp}(\bm \beta^d)\) as in the joint dictionary learning problem. We utilize the same group \(\ell_1\) algorithm as in (\ref{equ: groupbpdn_h}) as following
\begin{equation}\label{equ: groupbpdn_y}
\begin{aligned}
&\min_{\bm \beta}\sum_{j=1}^M\|\bm\beta\|_{\bm K_j}\quad
\text{subject to } \|\bm y-\bm G\bm \beta\|_2\le\epsilon
\end{aligned}
\end{equation}
where now we have
\begin{equation}\label{equ: yandG}
\bm y=\left[
        \begin{array}{c}
          \bm y^d \\
          \tau \bm y^u \\
        \end{array}
      \right],
\bm G=\left[
        \begin{array}{cc}
          \bm A\bm D^d & \textbf{0}_{T\times M} \\
          \textbf{0}_{N\times M} & \tau \bm D^u \\
        \end{array}
      \right]
\end{equation}
and the same definition of \(\bm K_j\). Notice that the norm of columns in \(\bm A\bm D^d\) can be much larger than the norm of columns in \(\bm D^u\) when \(\rho^d\) is large, which deemphasizes the role of uplink training in the noisy situation. So a constant \(\tau\) is multiplied to make the columns of \(\bm G\) to have similar norms. The solved \(\bm \beta\) has the form of \(\bm \beta = [(\bm \beta^d)^T (\bm \beta^u)^T]^T\).  By joint sparse recovery of \(\bm \beta^u, \bm \beta^d\), we are able to achieve improved  downlink channel estimates with the help of uplink training measurements. Notice the dimension of \(\bm y^u\) is \(N\) while dimension of \(\bm y^d\) is \(T^d\). In the Massive MIMO system where \(N\gg T^d\), the uplink training actually has larger number of measurements, which is beneficial for the sparse recovery algorithm.
We can also improve the signal to noise ratio of the uplink received signal by increasing the uplink training period \(T^u\).
Due to the constraint \(\text{supp}(\bm \beta^u)=\text{supp}(\bm \beta^d)\), \(\bm y^u\) and \(\bm y^d\) can regularize each other to achieve better recovery performance compared to independent recovery. More importantly, the performance of the downlink compressed channel estimation is improved without increasing the downlink training period \(T^d\).

Similar to the DLCM, there is a joint dictionary learning phase and a joint channel estimation phase.
During the joint dictionary learning phase, a large amount of channel measurements need to be collected as training samples. Each pair of uplink/downlink channel measurements has to be collected at the \emph{same user location}, in order to guarantee the same AOA/AOD for the uplink and downlink. This requirement is important since for each pair of \(\{\bm h^u_i,\bm h^d_i\}\) the learning process has the constraint \(\text{supp}(\bm \beta^u_i)=\text{supp}(\bm \beta^d_i)\). The joint dictionary learning is implemented when the cell is installed, and the learned \(\bm D^u, \bm D^d\) are stored at the base station. In the channel estimation phase, the BS transmits downlink pilots while the UE transmits uplink pilots, then the UE feeds back the received signal. The joint channel estimation is performed at the BS, from which the uplink and downlink channel state information is obtained.

\section{Simulation Results}\label{section: simulation}
In the simulation, we test both 2D and 3D channel model. For 2D channel, we assume the BS is equipped with an ULA with 100 antennas and each UE has a single antenna. The channel is generated using parameters from non line-of-sight (NLOS) Urban Macro scenario in \cite{3gpp.25.996}. Since the learned dictionary depends on the cell characteristics, we generate cell specific scattering clusters following the principles of Geometry-Based Stochastic Channel Model (GSCM) \cite{molisch2003geometry}. At the beginning of the simulation, \(21\) fixed location scattering clusters are uniformly generated in a cell with radius \(900\) meters and \(\theta\in[-\frac{\pi}{2},\frac{\pi}{2}]\), and then kept constant for the simulation of both dictionary learning and channel estimation. The user's location is also randomly and uniformly generated. For each channel response between the BS and the UE, it consists AOA/AOD of multi-paths from \(3\) fixed location scattering clusters which are closest to the UE, and \(1\) user-location dependent scattering cluster which is generated according to \cite{3gpp.25.996} based on the UE's location. All the other parameters, e.g., angular spread, delay spread, and path power, are all generated following \cite{3gpp.25.996}. The AOA/AOD values are identical between the uplink and downlink, while the phases of subpaths are random and uncorrelated \cite{3gpp.25.996}.
For 3D channel, the BS is assumed to be equipped with a \(10\times 10\) URA and the UE with a \(3\times 3\) URA. The channel is generated following the NLOS UMi-Street Canyon scenario in \cite{3gpp.38.901}, where the carrier frequency is assumed to be \(28\) GHz. The generation of cell specific clusters is similar to the 2D model, except that the cell radius is \(200\) meters with \(\theta\in[0,\pi]\), \(\phi\in[0,\pi]\), and each cluster has a height \(h\in[0.5, 30]\) while \(h_{\text{BS}}=10\) m and \(h_{\text{UE}}=1.5\) m, so elevation angles ZOA/ZOD can be calculated. Since the carrier frequency is \(28\) GHz, each channel response consists only \(1\) fixed location scattering cluster and \(1\) user-location dependent scattering cluster consistent with the small number of scattering clusters at the millimeter wave (mmWave) frequency \cite{samimi20163}.

Two kinds of antenna array are considered at the BS. The first is the perfectly calibrated antenna array, i.e., equal spacing \(d = \lambda/2\) between antenna elements and equal antenna gains as $1$. In the second case, there exist \emph{antenna uncertainties} in the form of \emph{unknown but fixed} calibration errors, where the antenna spacing and gains are deviating from the nominal values. We generate them as following: in \(100\) antennas, 20 antennas have gains \(1+e\) while the other 80 have gains 1. The \(e\sim\mathcal{N}(0,0.1)\), and if \(1+e>1.2\) or \(1+e<0.8\), then the gain is set to be 1.2 or 0.8. Among 99 antenna spacings, there are 20 having values \(d=(1+v)\lambda/2\) where \(v\sim\mathcal{N}(0,0.1)\). If \(1+v>1.2\) or \(1+v<0.8\), then the spacing is set to be \(1.2\lambda/2\) or \(0.8\lambda/2\). The rest of antenna spacings are \(d=\lambda/2\). After the antenna gains and spacings are generated, they are fixed in the whole simulation of dictionary learning and channel estimation.

For the dictionary learning, \(L=10000\) channel responses are generated, and for each channel responses the UE is randomly and uniformly located in the cell with at least 300 meters (60 meters for the mmWave scenario) from the BS. K-SVD
\cite{aharon2006svd} combined with \(\ell_1\) or group \(\ell_1\) algorithm (implemented using SPGL1 toolbox \cite{van2008probing}) 
are applied. Unless otherwise indicated, the dictionary is learned from the true channel responses without accounting for any measurement noise
\footnote{We should emphasize it is an ideal assumption and our results are only to prove the concept of using dictionary learning for channel estimation. In the simulation, we provide an experimental study to show the effect of inaccurate training channel measurements for the dictionary learning.}.
We compare \(100\times 400\) learned dictionary \(\bm D\) (DLCM) with \(100\times 100\) DFT matrix \(\bm F\) (DFT) and \(100 \times 400\) overcomplete DFT matrix \(\tilde{\bm F}\) (ODFT).

\subsection{Sparse Representation Using DLCM}
\begin{figure*}[!t]
\centering
\begin{subfigure}[]{
\includegraphics[width=0.47\textwidth]{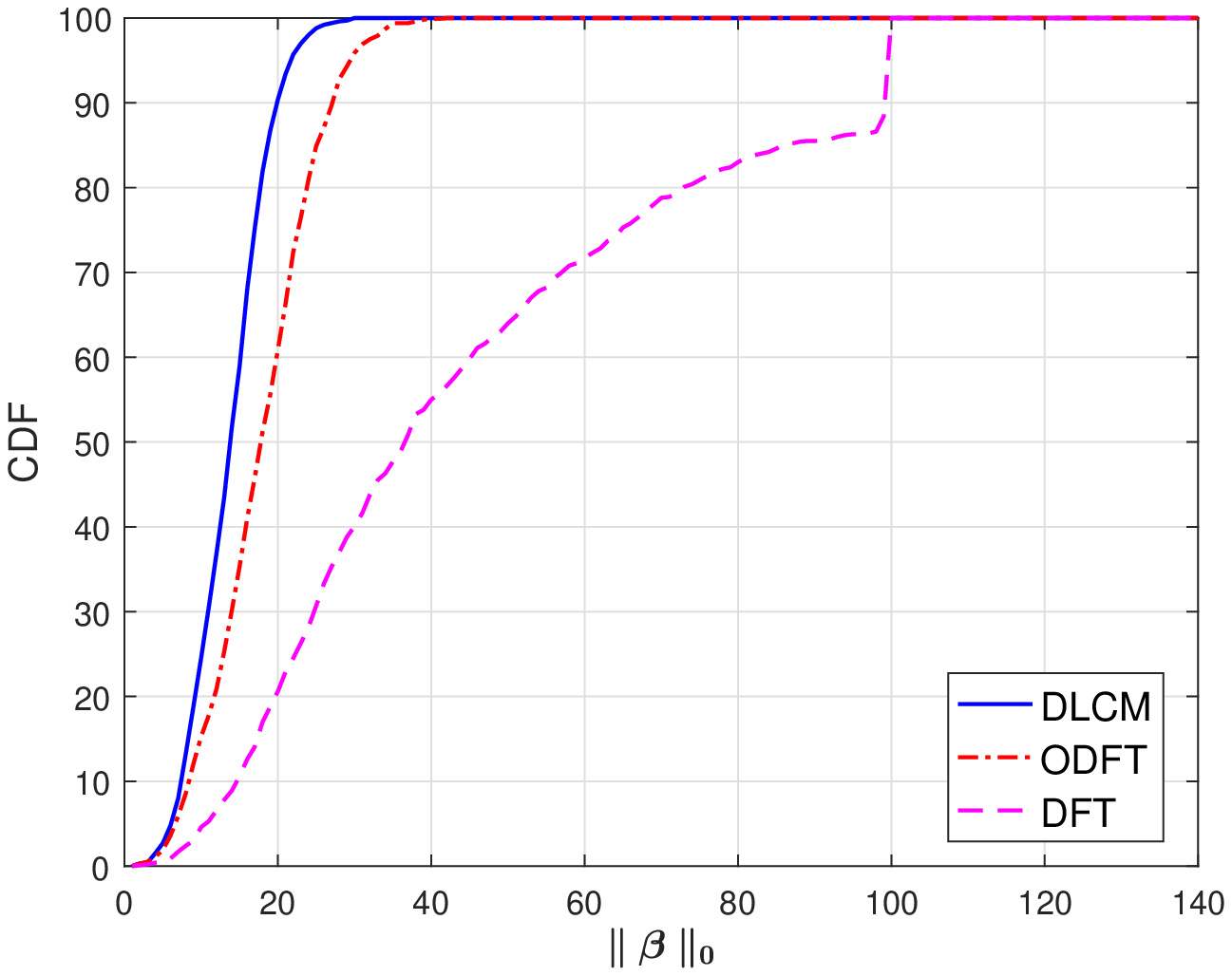}
}
\end{subfigure}
\begin{subfigure}[]{
\includegraphics[width=0.47\textwidth]{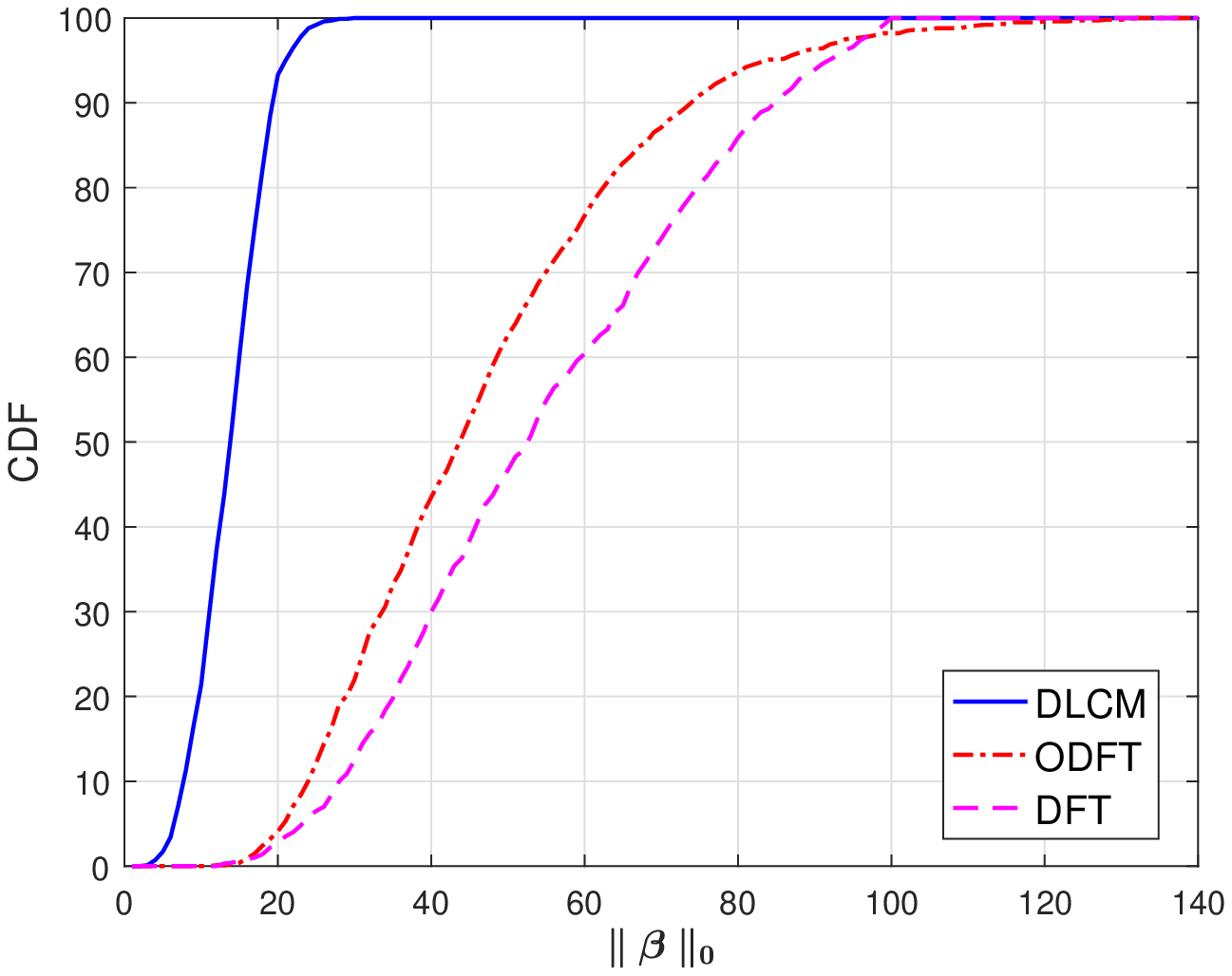}
}
\end{subfigure}
\caption{Cumulative distribution function of \(\|\bm \beta\|_0\). (a) Perfectly calibrated antenna array. (b) Antenna array with uncertainties. \(N=100\), \(\eta=0.1\).}
\label{fig: l0l1}
\end{figure*}

The motivation of using DLCM is to find a dictionary which can (a) more efficiently represent the channel response, i.e., the sparse coefficient has fewer number of nonzero entries; and (b) more robustly represent the channel response, i.e., adapting to any model mismatches like antenna uncertainties.
We generate \(1000\) channel responses \(\bm h_i\), normalize them to have unit norm and calculate the sparse coefficient using \(\ell_1\) framework
\footnote{The sparse representation can be obtained by any sparse recovery algorithm, where different algorithm may lead to different sparse coefficient. We use \(\ell_1\) framework here to be consistent with the algorithm applied in dictionary learning and channel estimation.}
:
\begin{equation}\label{equ: channel representation l1}
\hat{\bm \beta}_i =\arg\min_{\bm \beta_i} \|\bm \beta_i\|_1 \text{ subject to } \|\bm h_i - \bm D\bm \beta_i\|_2\le \eta\\
\end{equation}
where \(\eta\) is set to be $0.1$, so the tolerance of model mismatch is \(10\%\). We then compute \(\|\hat{\bm \beta}_i\|_0\) and plot its cumulative distribution function (CDF) using \(1000\) channel responses. Fig.\ref{fig: l0l1} (a) shows that for perfectly calibrated antenna array, the learned dictionary can represent channel responses using fewer number of nonzero entries. 
For example, \(90\%\) of channel responses can be represented using about \(20\) columns from the learned dictionary, while it requires about 27 or 100 columns if using overcomplete DFT matrix or square DFT matrix. 
In Fig.\ref{fig: l0l1} (b), we test antenna array with uncertainties. Both predefined sparsifying matrices are no longer able to sparsely represent the channel, while the learned dictionary achieve efficient sparse representation similar to Fig.\ref{fig: l0l1} (a).
The results indicate that for a perfectly calibrated antenna array, the suggested overcomplete DFT matrix is a reasonably good sparsifying matrix with only a little inferior than the learned dictionary.
However, when antenna array has uncertainties, predefined matrices degrades considerably due to the huge structure mismatches. In contrast, the learned dictionary leads to efficient and robust representation in both situations, since it is learned from the data without any structure constraint.


\subsection{Downlink Channel Estimation}
\begin{figure*}[!t]
\centering
\begin{subfigure}[]{
\includegraphics[width=0.47\textwidth]{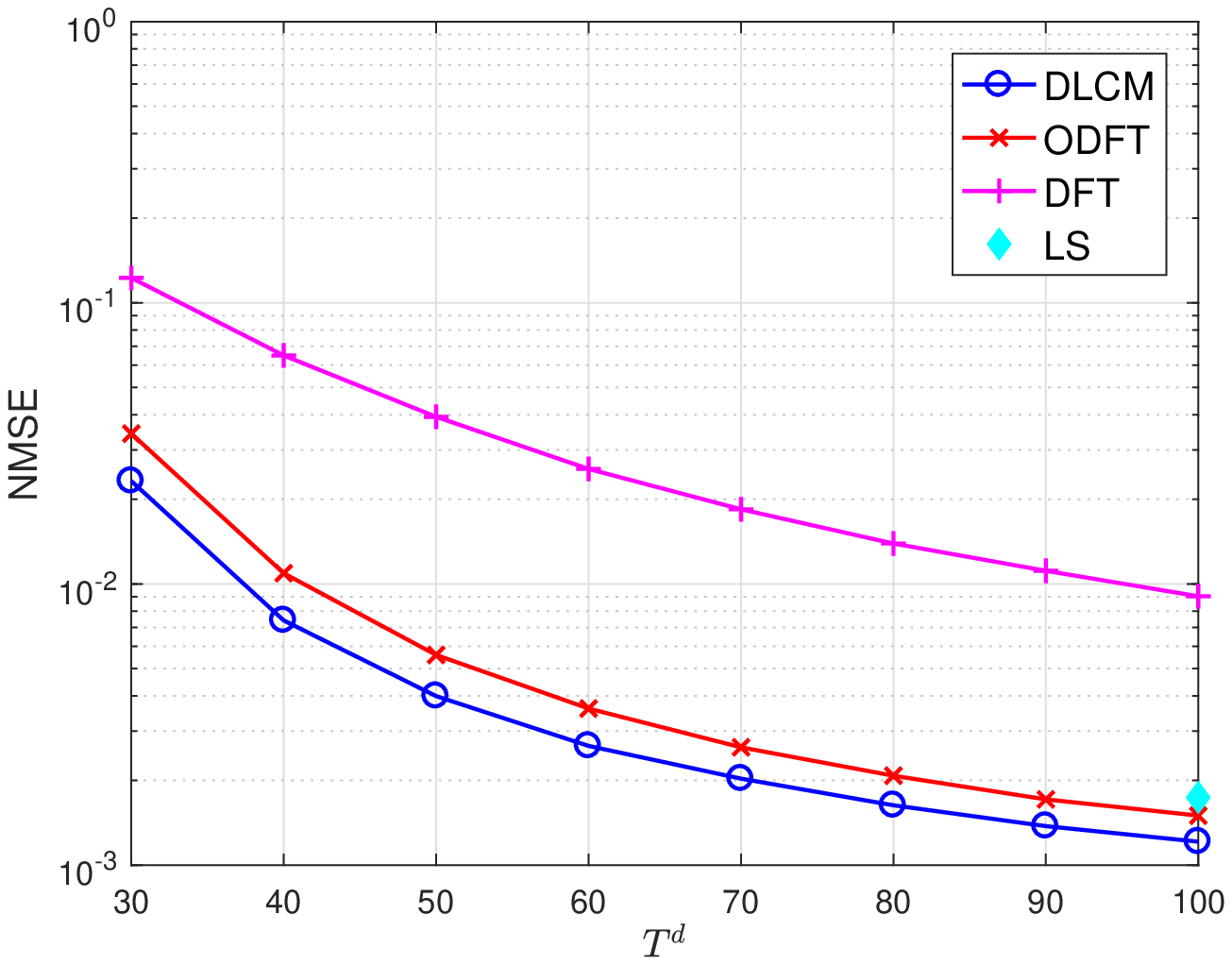}
}
\end{subfigure}
\begin{subfigure}[]{
\includegraphics[width=0.47\textwidth]{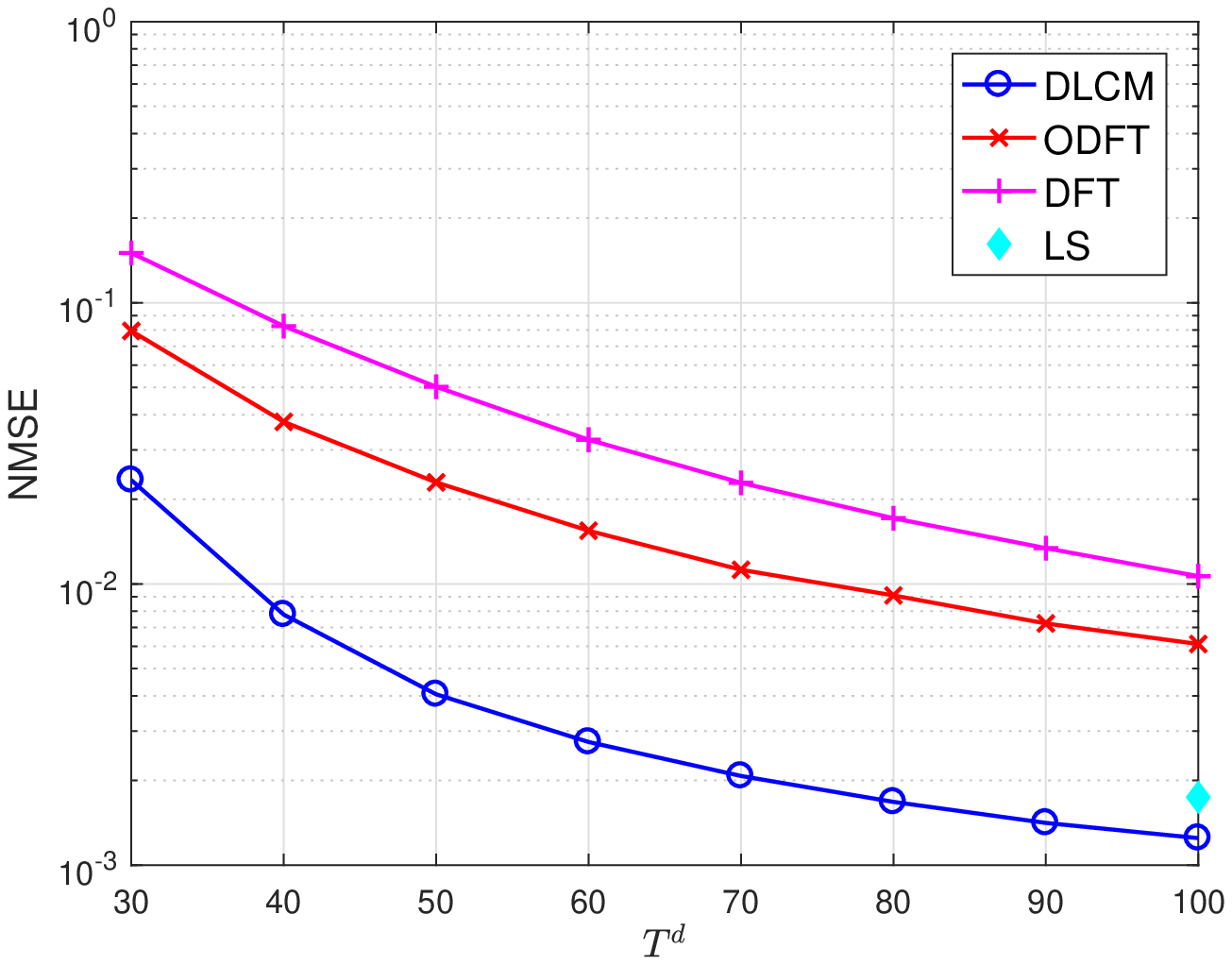}
}
\end{subfigure}
\caption{Normalized mean square error (NMSE) comparison of different sparsifying matrices for CS based downlink channel estimation with ULA. (a) Perfectly calibrated antenna array. (b) Antenna array with uncertainties. SNR = 20dB.}
\label{fig: dl}
\end{figure*}

To evaluate how the channel representation affects the channel estimation, we compare the performance of compressive sensing based downlink channel estimation when different sparsifying matrices are applied.
The training pilots in \(\bm A\) is generated as i.i.d. \(\mathcal{CN}(0, \rho^d/N)\), so \(\mathbb{E}\|\bm A\|_F^2 = \rho^d T^d\). The normalized mean square error (NMSE) is used as the performance metric and defined as \(\text{NMSE}=\mathbb{E}\{\|\bm h-\hat{\bm h}\|_2^2/\|\bm h\|_2^2\}\).
We first consider the 2D channel model with ULA.
Fig.\ref{fig: dl} (a) plots the NMSE performance with respect to the number of downlink pilot symbols \(T^d\), when a perfectly calibrated antenna array is applied. We also include the LS channel estimation when \(T^d=100\) for comparison. 
To achieve the same NMSE, both DLCM and ODFT requires much less training pilots compared to DFT, and the DLCM saves more than ODFT.
Antenna array with uncertainties is tested in Fig.\ref{fig: dl} (b). It shows that the performance of ODFT degrades considerably, while the DLCM achieves almost the same accuracy as in Fig.\ref{fig: dl} (a). So when the antenna array is not perfectly calibrated, only the learned dictionary can achieve great savings on downlink training overhead.

\begin{figure*}[!t]
\centering
\begin{subfigure}[]{
\includegraphics[width=0.47\textwidth]{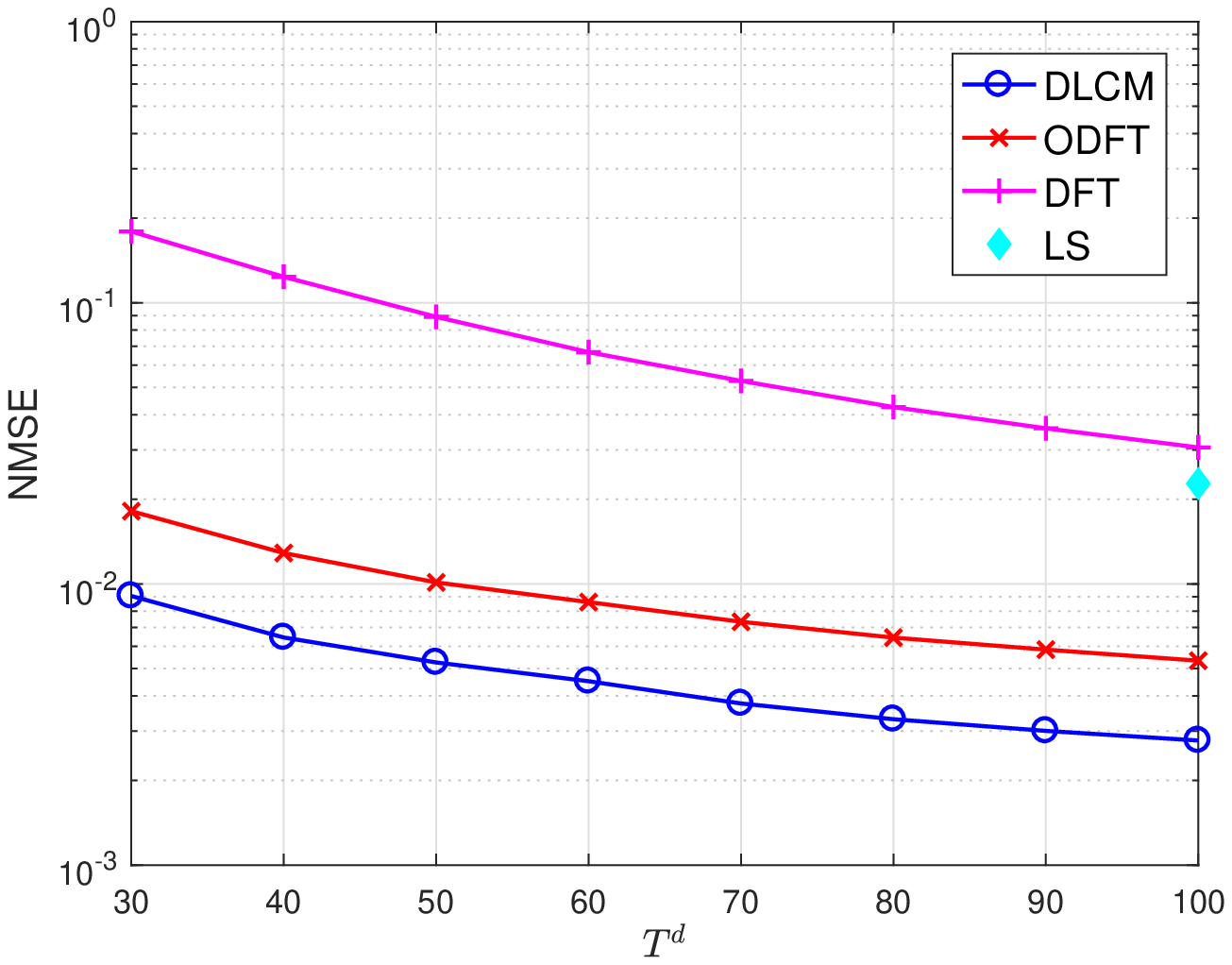}
}
\end{subfigure}
\begin{subfigure}[]{
\includegraphics[width=0.47\textwidth]{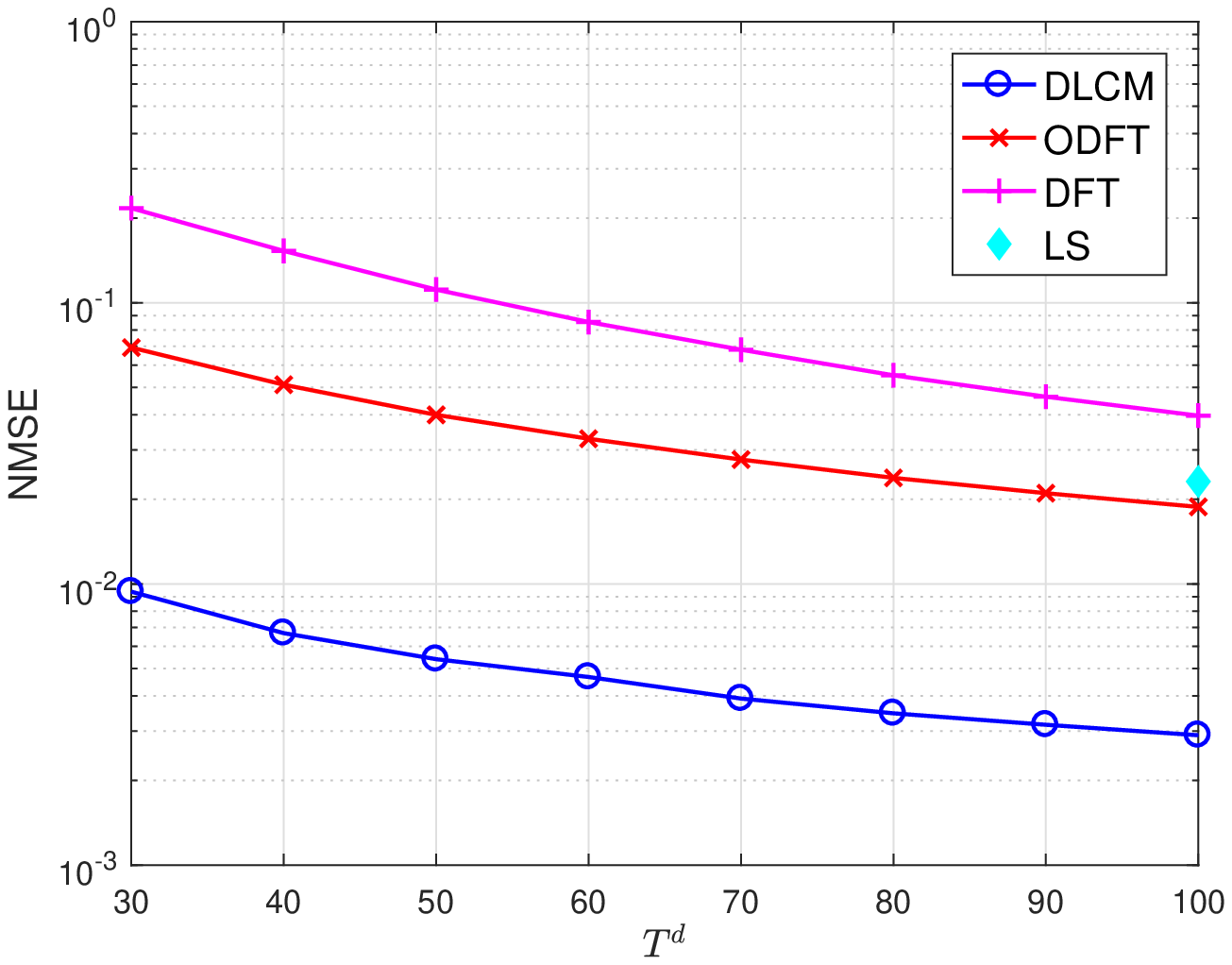}
}
\end{subfigure}
\begin{subfigure}[]{
\includegraphics[width=0.47\textwidth]{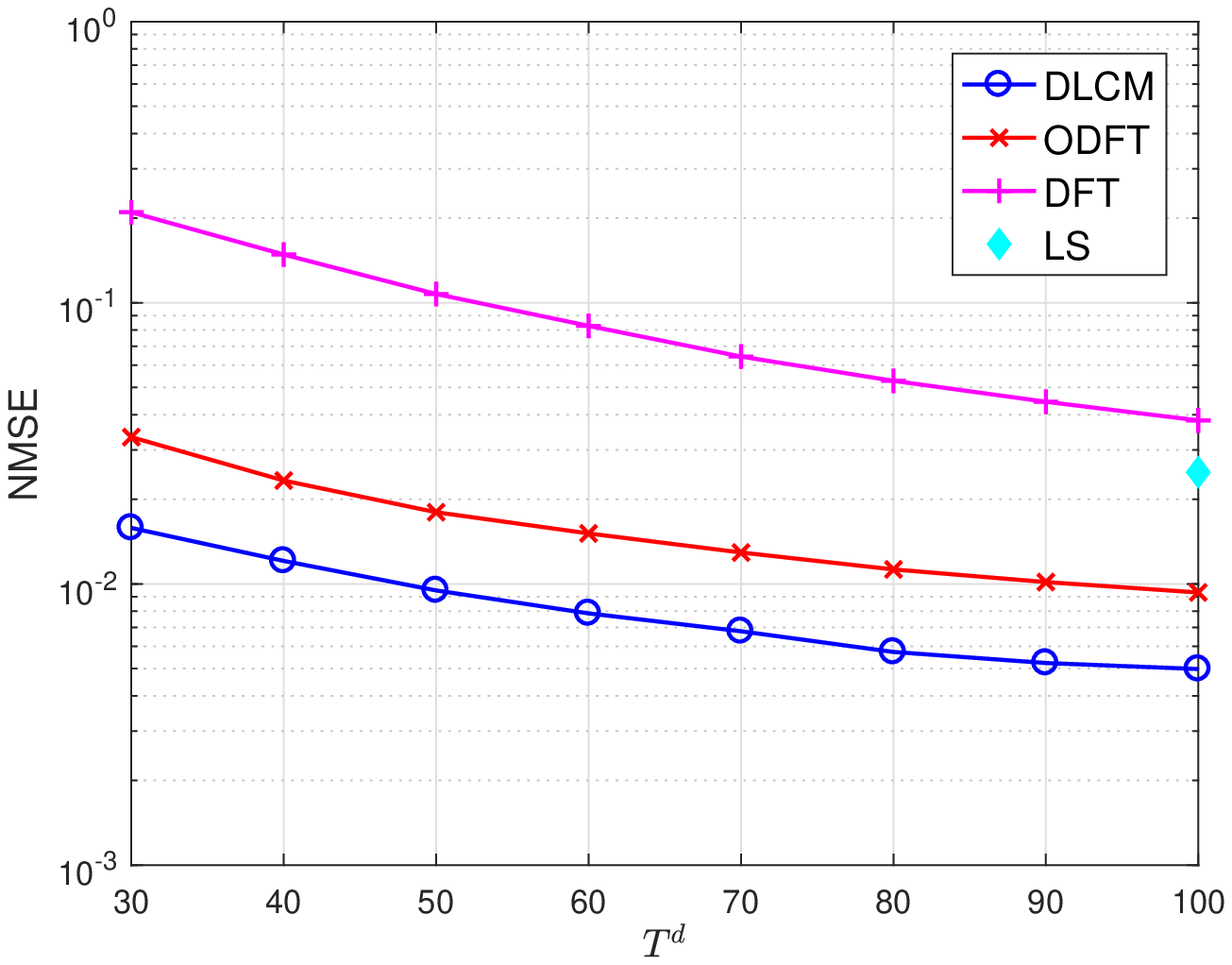}
}
\end{subfigure}
\begin{subfigure}[]{
\includegraphics[width=0.47\textwidth]{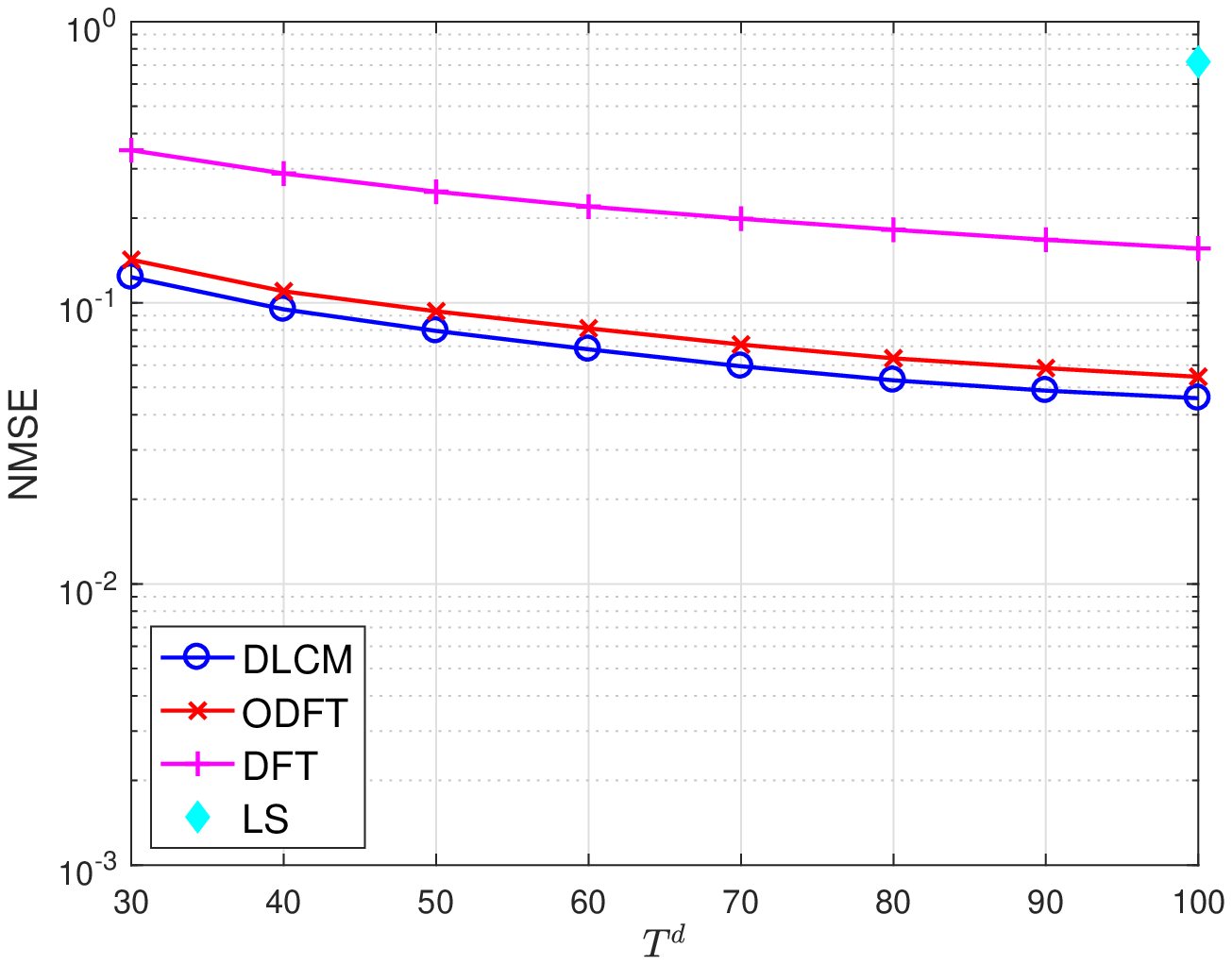}
}
\end{subfigure}
\caption{Normalized mean square error (NMSE) comparison of different sparsifying matrices for CS based downlink channel estimation with URA in mmWave scenario. (a) Perfectly calibrated antenna array. \(\text{BS: } 10\times 10, \text{UE: } 3\times 3, \text{SNR = 20 dB}\).  (b) Antenna array with uncertainties. \(\text{BS: } 10\times 10, \text{UE: } 3\times 3, \text{SNR = 20 dB}\). (c) Perfectly calibrated antenna array. \(\text{BS: } 10\times 10, \text{UE: } 1\times 1, \text{SNR = 20 dB}\). (d) Perfectly calibrated antenna array.  \(\text{BS: } 10\times 10, \text{UE: } 3\times 3, \text{SNR = 5 dB}\).}
\label{fig: mmWave}
\end{figure*}

In Fig.\ref{fig: mmWave}, we test mmWave 3D channel model with a \(10\times 10\) URA at the BS and a \(3\times 3\) URA at the UE. The MMV channel estimation in (\ref{equ: MMV minimization}) is applied with respect to multiple antennas at the UE. 
Performance comparison with perfectly calibrated antenna array and antenna array with uncertainties are shown in Fig.\ref{fig: mmWave} (a) and Fig.\ref{fig: mmWave} (b). Similar to the 2D channel model with ULA, the DLCM achieves better performance than ODFT and DFT. The result demonstrates the applicability of the proposed DLCM framework to different antenna geometry and frequency band. Fig.\ref{fig: mmWave} (c) plots the performance with only a single antenna at the UE. Compared to Fig.\ref{fig: mmWave} (a) where the UE has \(9\) antennas, the performance becomes worse. This shows the benefit of having multiple antennas at UE and utilizing the proposed MMV formulation (\ref{equ: MMV minimization}) to estimate the channel. To study the channel estimation in low SNR range, Fig.\ref{fig: mmWave} (d) depicts the performance when \(\text{SNR} = 5 \text{dB}\). Compared to Fig.\ref{fig: mmWave} (a) where \(\text{SNR} = 20 \text{dB}\), the performances of all sparsifying matrices are worse, and the differences among them become small. The reason is that when the noise is large, the accuracy of the channel estimation is limited mostly by the noise, so the model mismatch error from applying different sparsifying matrices has only small influence on the performance. 
In a practical system, performance of channel estimation depends on many factors such as the noise level, model mismatch error, and the number of antennas. More studies are required to show under what condition the DLCM can achieve the greatest improvement.
\begin{figure*}[!t]
\centering
\begin{subfigure}[]{
\includegraphics[width=0.47\textwidth]{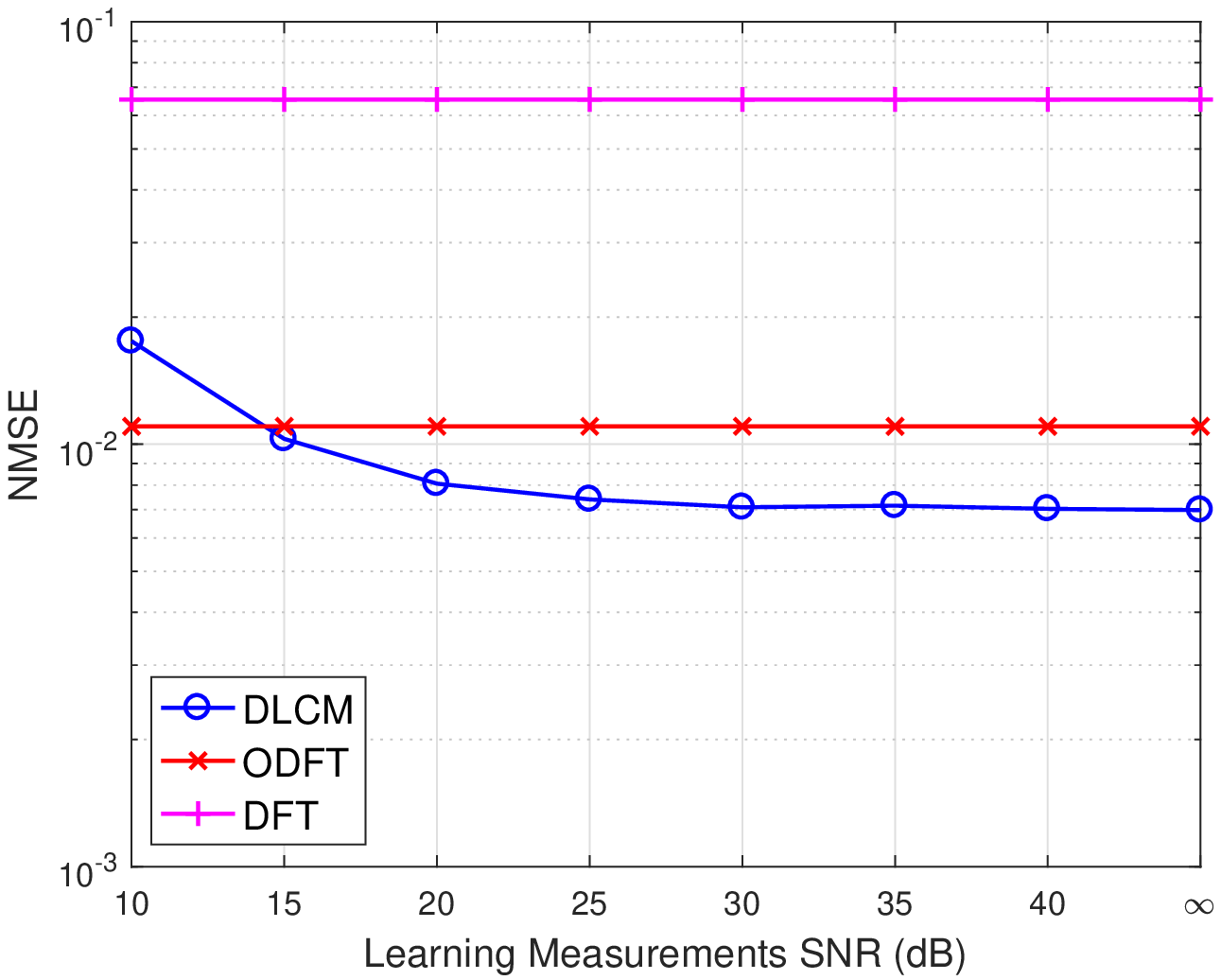}
}
\end{subfigure}
\begin{subfigure}[]{
\includegraphics[width=0.47\textwidth]{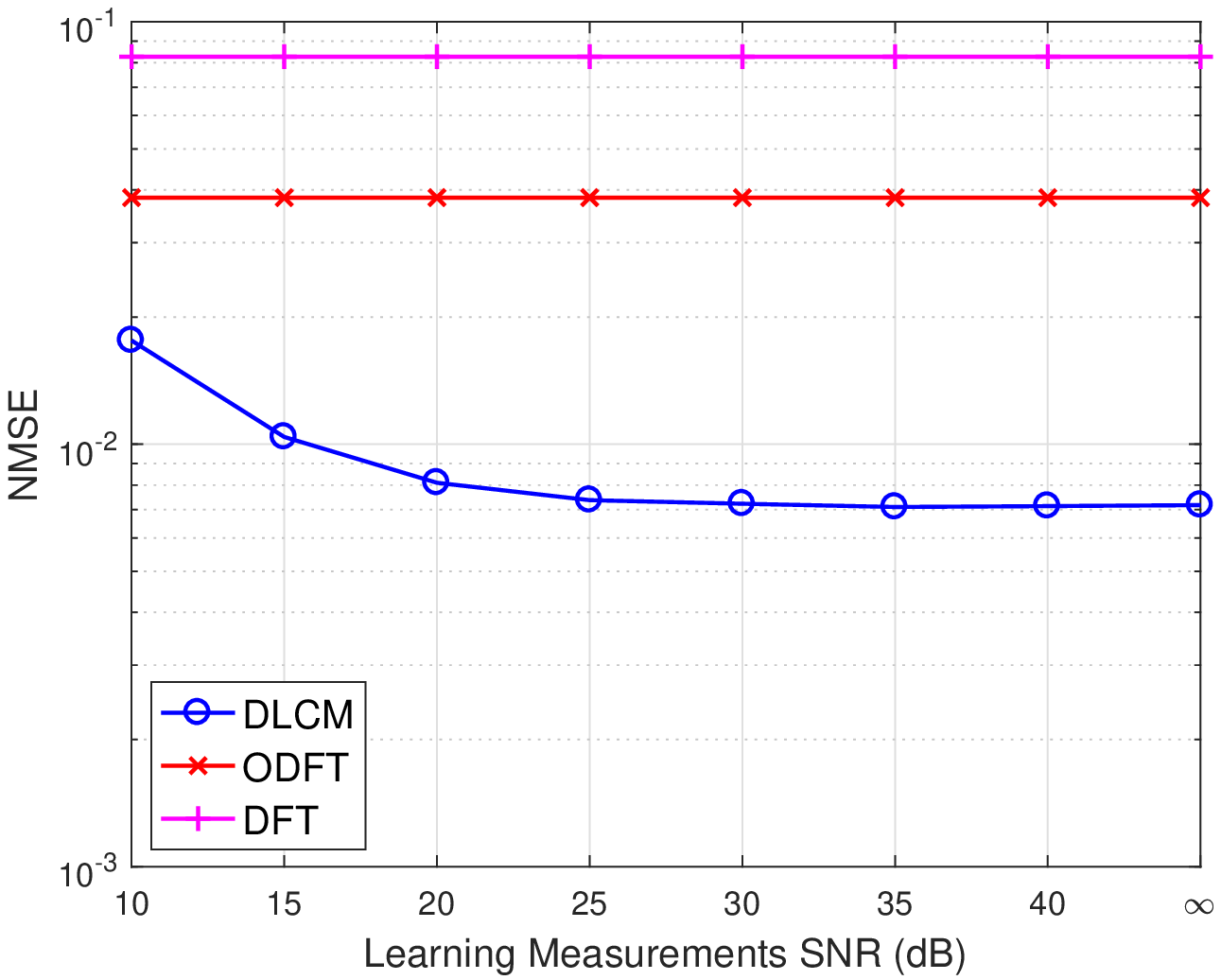}
}
\end{subfigure}
\caption{Normalized mean square error (NMSE) versus learning measurements SNR for CS based downlink channel estimation with ULA. (a) Perfectly calibrated antenna array. (b) Antenna array with uncertainties. \(T^d=40\), SNR = 20dB.}
\label{fig: comp_dic}
\end{figure*}

Next we investigate the performance of DLCM when the dictionary is learned from channel responses corrupted by the noise, since in practice channel measurements can have some estimation error in them. We consider the 2D channel model using ULA, and add noise \(\tilde{\bm n}_i\) to the true channel response \(\bm h_i\), where \(\tilde{\bm n}_i\sim\mathcal{CN}(\mathbf{0},\sigma^2_i\mathbf{I})\) and \(\sigma_i\) is chosen according to the ``learning measurements SNR'' which is defined as \(\|\bm h_i\|_2^2/{\mathbb{E}\|\tilde{\bm n}_i\|_2^2}=\|\bm h_i\|_2^2/(N\sigma_i^2)\). Then the corrupted channel response is obtained as \(\tilde{\bm h}_i = \bm h_i+ \tilde{\bm n}_i\), which is used as the training channel samples for the dictionary learning. 
Fig. \ref{fig: comp_dic} compares the NMSE of downlink channel estimation with respect to the learning measurements SNR, where for each learning measurements SNR a different dictionary is learned. 
We also include the performance when the true channel response \(\bm h_i\) is used as the training samples, and denoted it as \(\infty\) learning measurements SNR. When the learning measurements SNR is low, the performance of DLCM degrades since the dictionary learning process can not accurately capture the channel structure from too noisy channel measurements. As the learning measurements SNR increases, the performance of DLCM becomes better and approaches the performance of learning from noiseless measurements. Notice that when the antenna array is not perfectly calibrated, as shown in Fig. \ref{fig: comp_dic} (b), DLCM can obtain better performance than predefined sparsifying matrices even with dictionary learned from very noisy measurements. Since the dictionary learning is performed at the cell deployment stage, a high learning measurements SNR can be achieved by using more training pilots, higher transmitted power, and more sophisticated equipment. As a result, the learned dictionary is expected to efficiently represent the sparse channel and lead to a good channel estimation performance using DLCM.

\subsection{Uplink Channel Estimation}
\begin{figure*}[!t]
\centering
\begin{subfigure}[]{
\includegraphics[width=0.47\textwidth]{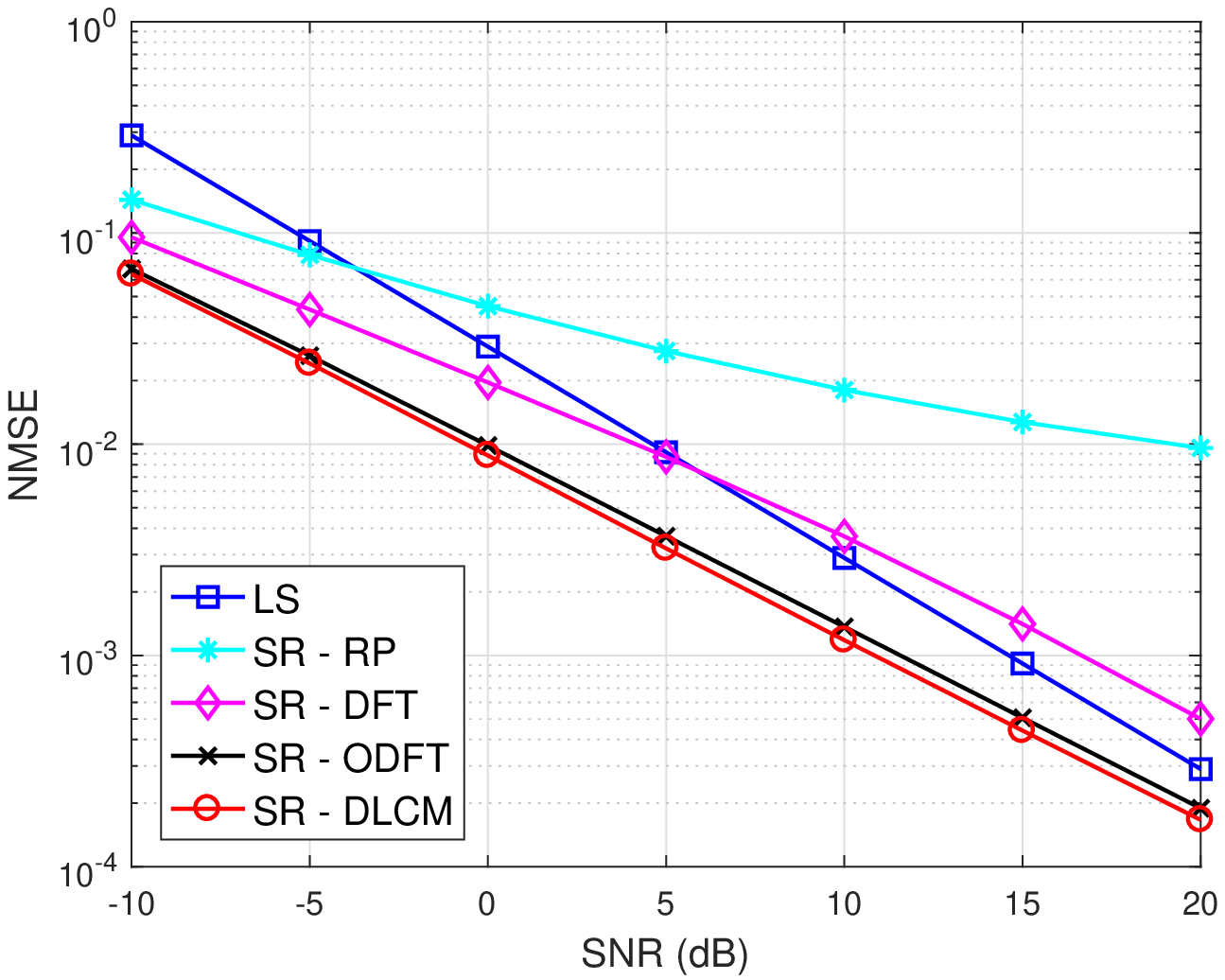}
}
\end{subfigure}
\begin{subfigure}[]{
\includegraphics[width=0.47\textwidth]{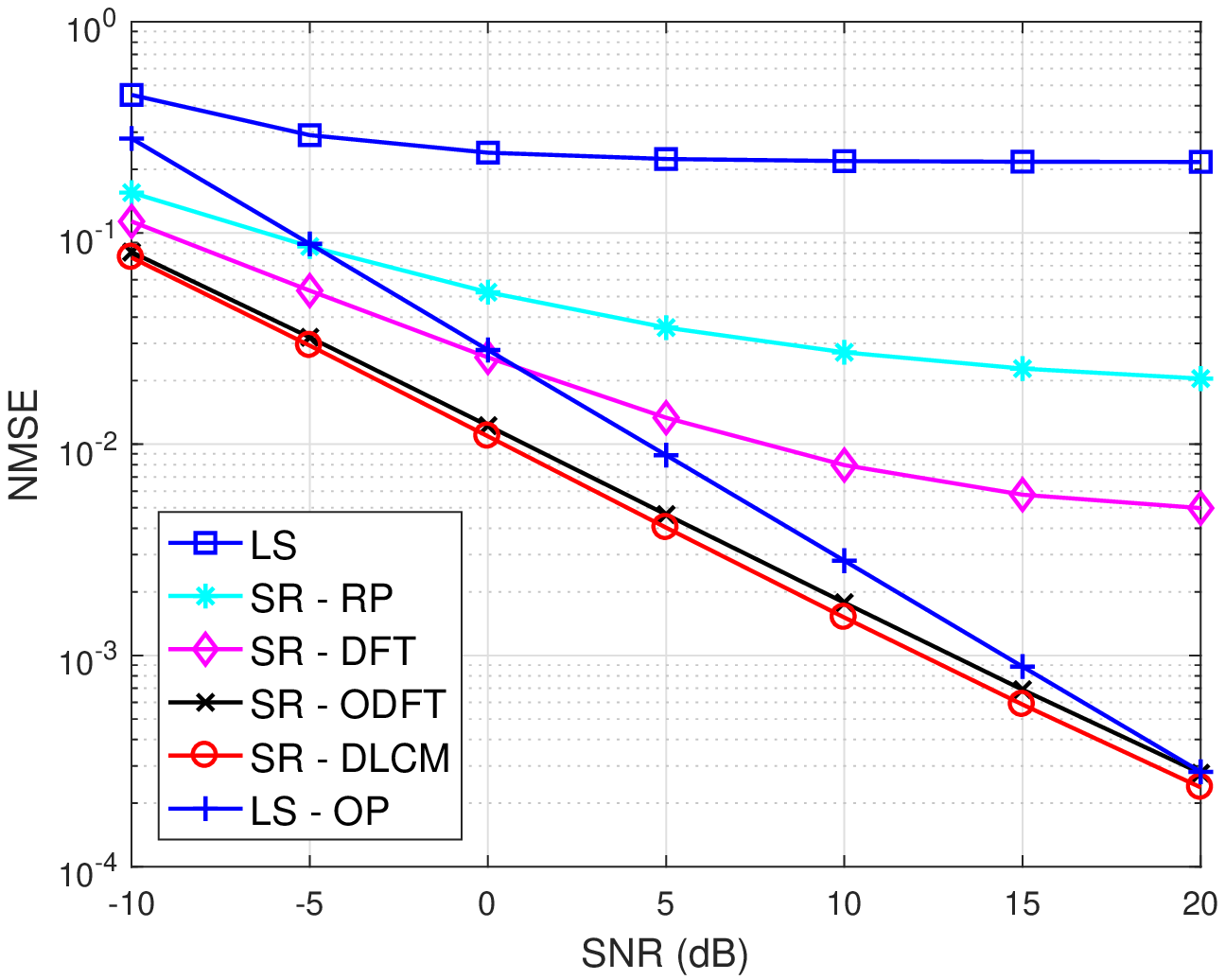}
}
\end{subfigure}
\begin{subfigure}[]{
\includegraphics[width=0.47\textwidth]{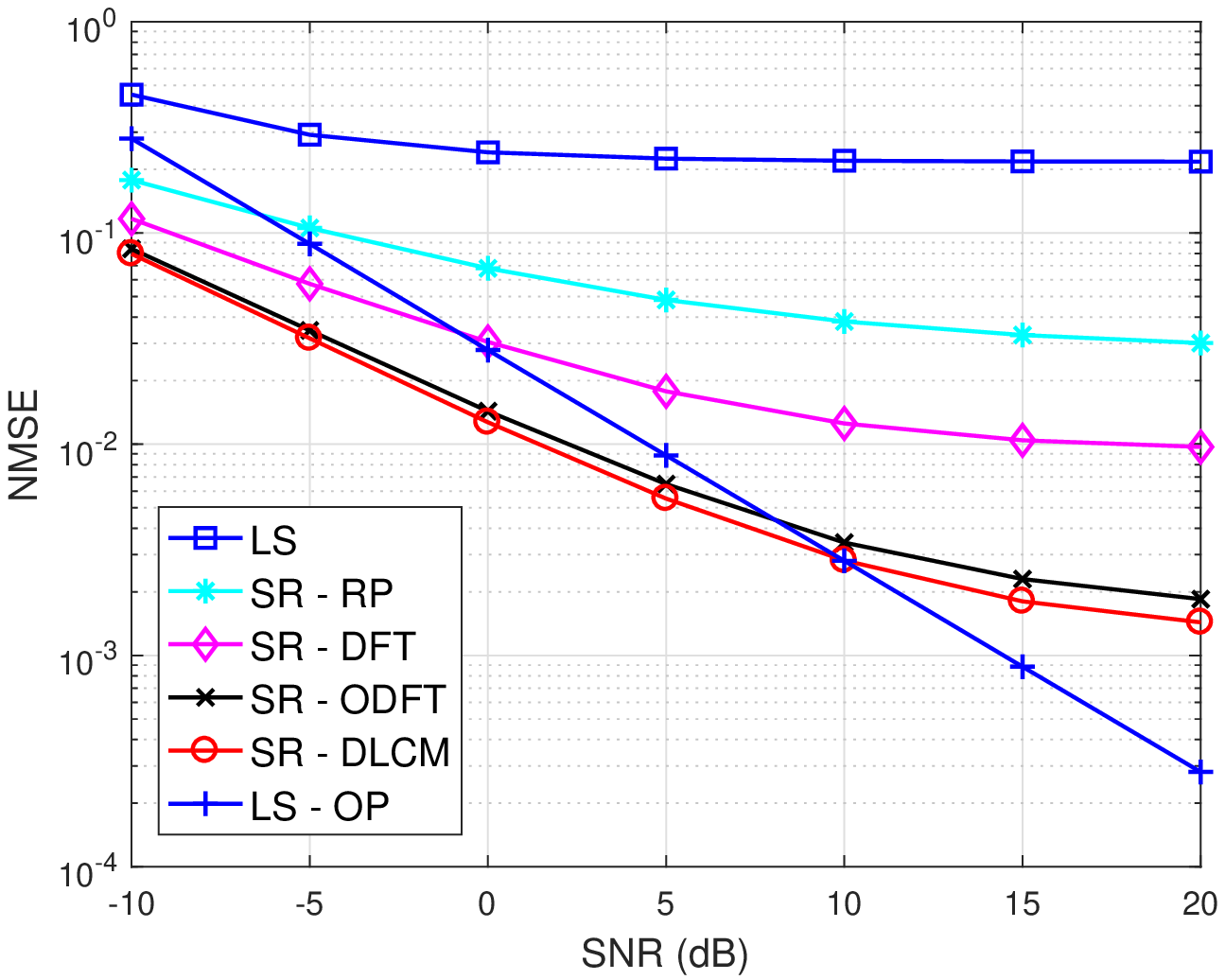}
}
\end{subfigure}
\begin{subfigure}[]{
\includegraphics[width=0.47\textwidth]{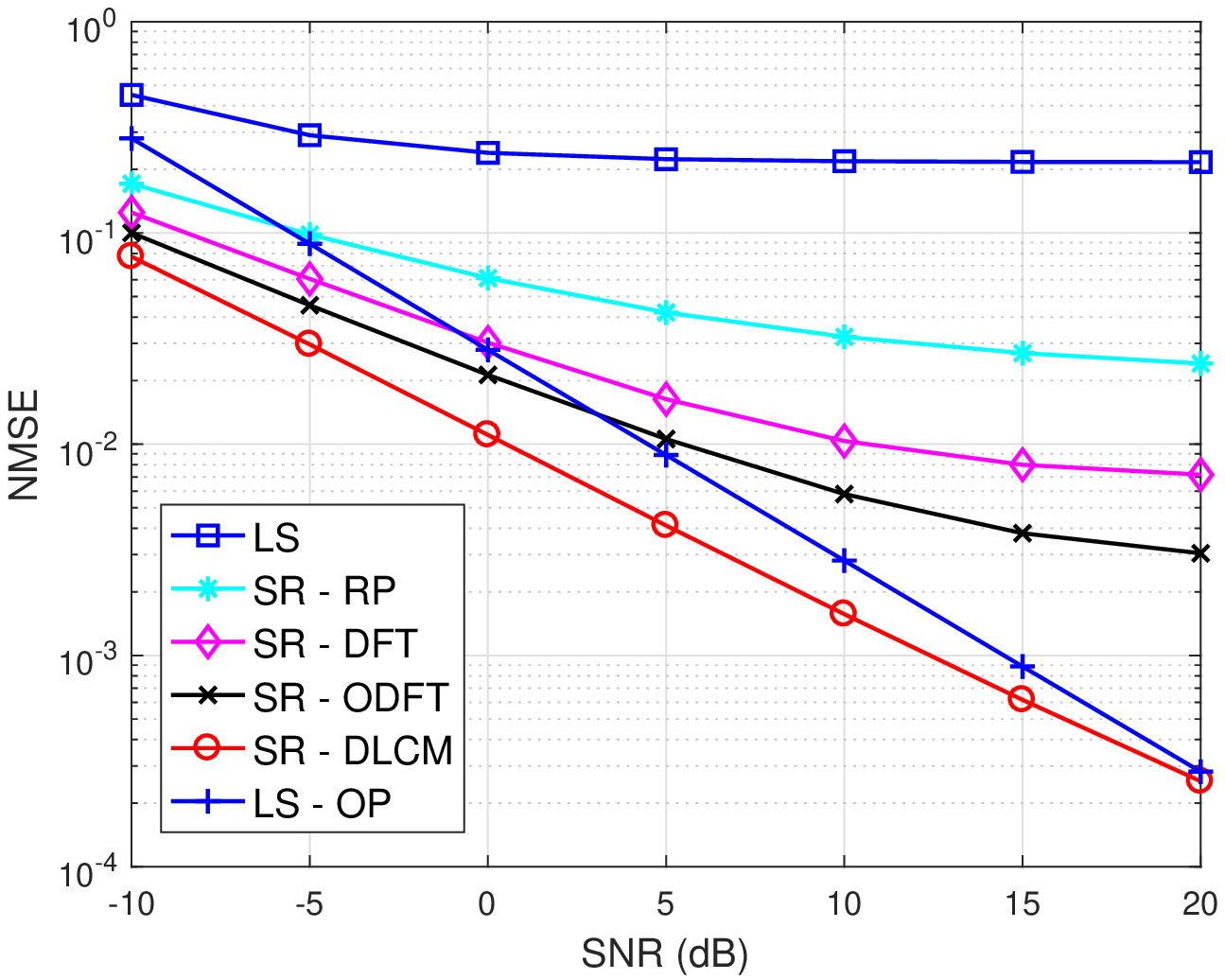}
}
\end{subfigure}
\caption{Normalized mean square error (NMSE) comparison of different sparsifying matrices, pilots design and user scheduling schemes for SR based uplink channel estimation with ULA. (a) Orthogonal users, perfectly calibrated antenna array. \(K=6, T^u=6\). (b) Orthogonal users, perfectly calibrated antenna array. \(K=6, T^u=5\). (c) Non-orthogonal users, perfectly calibrated antenna array. \(K=6, T^u=5\). (d) Orthogonal users, antenna array with uncertainties. \(K=6, T^u=5\).}
\label{fig: ul}
\end{figure*}

We now evaluate performance of the sparse recovery (SR) based uplink channel estimation using 2D channel model. Assume there are \(K=6\) users. 
For simplicity of illustration, we assume the same \(\rho^u_k\) for all users and plot the average NMSE versus the SNR
\footnote{We also performed simulations where different \(\rho^u_k\) is assigned to each user, and evaluate each user's NMSE performance separately. Similar conclusions can be made as the same \(\rho^u_k\) scenario.}.
In Fig.\ref{fig: ul} (a), the number of uplink pilots \(T^u=K=6\), 
so orthogonal pilots are used for both LS and SR with different sparsifying matrices. We also compare SR with random pilots (SR-RP), which is possibly nonorthogonal. To encourage non-overlapping or limited overlapping sparse support, the locations of users are generated to be far away from each other, with each user's LOS AOA constrained in a distinct \(\pi/K\) range. 
For the whole SNR range, SR-DLCM and SR-ODFT are better than LS, while SR-DFT is worse than LS at high SNR. 
SR-RP can not achieve good performance, since random pilots can not lead to small \(\mu\{\bm E\}\). 
In Fig.\ref{fig: ul} (b), \(T^u=5\) is tested. The performance of LS degrades a lot, since \(T^u<K\) and the problem is underdetermined for LS estimation. On the other hand, SR with pilots design suggested in Section \ref{subsection: uplink channel estimation} has only little degradation. LS using orthogonal pilots (LS-OP, where \(T^u=6\)) is also provided for comparison. Notably, SR-DLCM and SR-ODFT with \(T^u =5\) can achieve even better performance than LS with \(T^u =6\), showing the great benefit of using SR for uplink channel estimation. 
In Fig.\ref{fig: ul} (c), users' locations are randomly and uniformly generated, so their supports can possibly be overlapped a lot. In this case the SR based channel estimation degrades severely at high SNR, indicating the importance of minimizing \(\mu\{\bm E_{\Lambda}\}\) in order to achieve good sparse recovery performance in (\ref{equ: uplink sparse reocvery}). 
In Fig.\ref{fig: ul} (d), the antenna array with uncertainties is used to show the robustness of the learned dictionary. 

Fig.\ref{fig: ul} shows the benefits of utilizing sparse property to perform the uplink channel estimation, and the essential requirements are (a) sparsifying matrix which can lead to efficient and robust sparse representation; (b) pilots design scheme which minimizes \(\mu\{\bm E\}\); and (c) user scheduling scheme which decreases \(\mu\{\bm E_{\Lambda}\}\). Even with \(T^u<K\), SR based channel estimation can still achieve good performance. The experiments in Fig.\ref{fig: ul} consider the single cell scenario, but can be easily extended to multi-cell scenario for pilot decontamination. For example, consider $6$ cells and each of them has $6$ users. Assume the total uplink training duration constraint \(T^u=30\). If LS channel estimation is applied, then each cell requires at least $6$ training duration, so for all $6$ cells their pilots can not be orthogonal to each other and pilot contamination occurs. However, by using SR based channel estimation, each cell requires only $5$ training duration to achieve the similar (even better) performance than LS. Training duration of $30$ is enough for $6$ cells to have orthogonal pilots, so there is no pilot contamination anymore.

\subsection{Channel Estimation Using Jointly Learned Dictionary}
\begin{figure*}[!t]
\centering
\begin{subfigure}[]{
\includegraphics[width=0.47\textwidth]{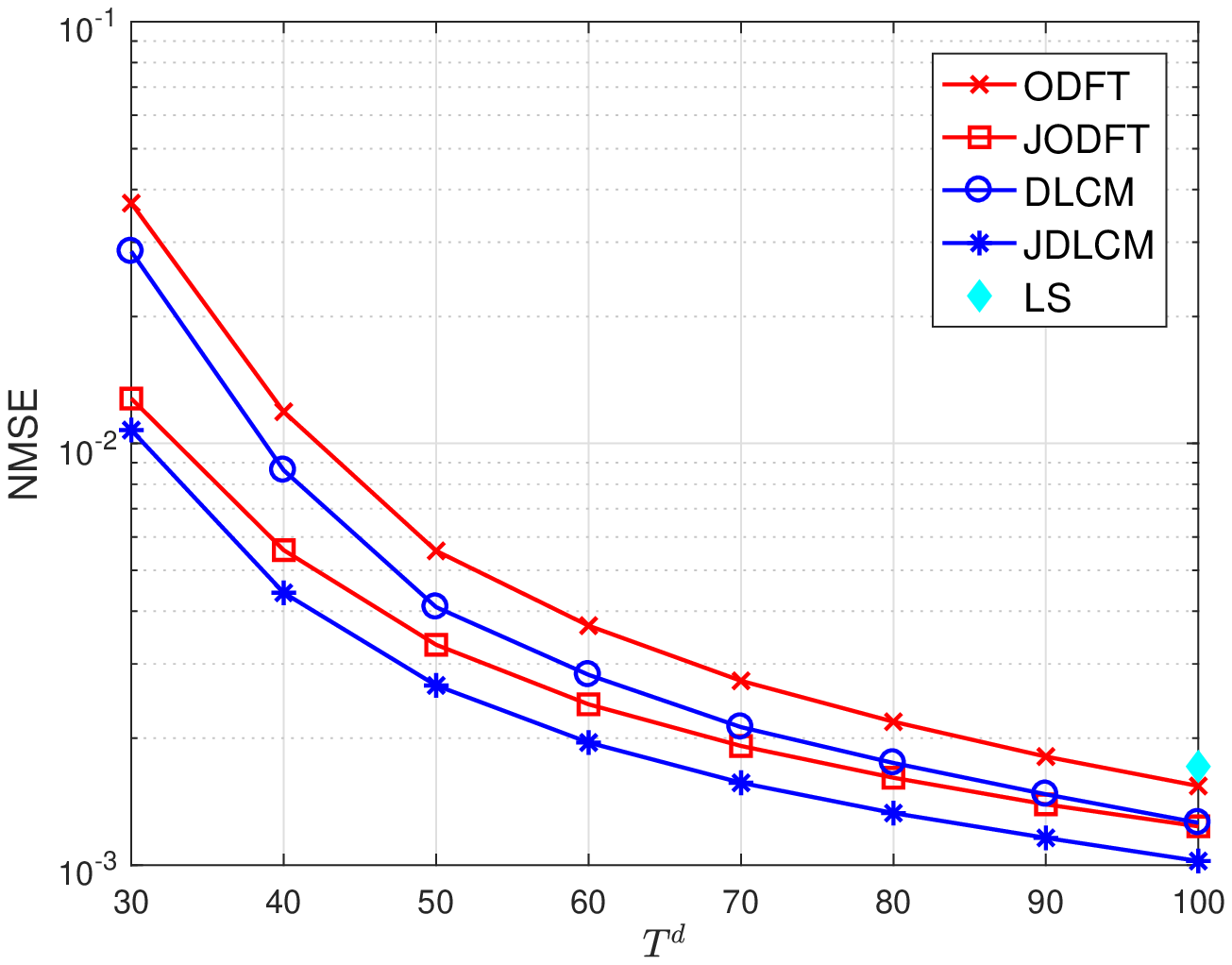}
}
\end{subfigure}
\begin{subfigure}[]{
\includegraphics[width=0.47\textwidth]{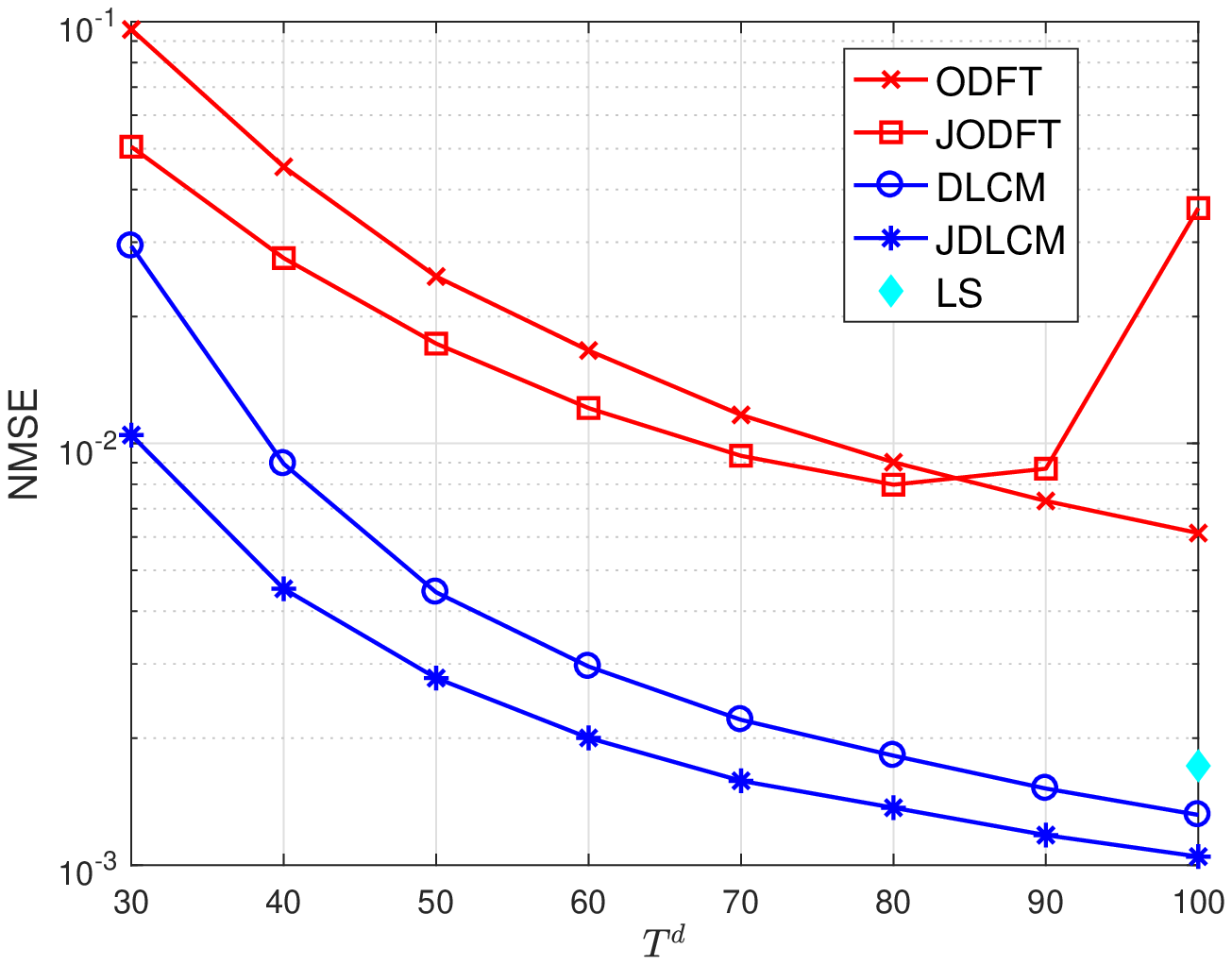}
}
\end{subfigure}
\caption{Normalized mean square error (NMSE) comparison of different sparsifying matrices for joint channel estimation with ULA. (a) Perfectly calibrated antenna array. (b) Antenna array with uncertainties. SNR = 20dB.}
\label{fig: judl}
\end{figure*}

For joint channel estimation, assume the uplink frequency is 1920 MHz and downlink frequency is 2110 MHz. The antenna spacing \(d=\frac{c}{2f_0}\) where \(c\) denotes the light speed and \(f_0=2010\) MHz. 
During channel estimation, we set \(T^u =2\), and \(\rho^u=\rho^d\). 
Fig. \ref{fig: judl} (a) compares downlink joint and independent channel estimation performance. Besides the jointly learned dictionary (JDLCM), we consider the joint overcomplete DFT matrix (JODFT) obtained by setting \(\psi^u=d\text{sin}(\theta)/\lambda^u\), \(\psi^d=d\text{sin}(\theta)/\lambda^d\) and \(\text{sin}(\theta)\in[-1,-1+\frac{2}{M},\ldots,1-\frac{2}{M}]\) as in (\ref{equ: overcomplete DFT}). Smaller NMSE can be obtained by the joint channel estimation compared to their independent counterpart. Such improvement is most obvious when \(T^d\) is small, since the additional measurements from the uplink training help a lot. Fig. \ref{fig: judl} (b) shows the robustness of JDLCM when there exist antenna uncertainties. The JODFT is no longer applicable in this case since the structure is incorrect, and for the large \(T^d\) it becomes even worse than the ODFT.
With the help of small number of uplink training (\(T^u=2\) in the experiment), one can further improve the performance of downlink channel estimation therefore reduce the downlink training overhead. 
The simulation is conducted in the microwave scenario. More investigation, especially real experimental measurements, are needed to support the uplink/downlink angular reciprocity in a mmWave scenario. 

\section{Conclusion}\label{section: conclusion}
In this paper, we develop a dictionary learning based channel model which learns a cell specific dictionary from comprehensively collected channel measurements from different locations in the cell.
The learned dictionary is able to adapt to the cell characteristics and any antenna array uncertainties, leading to a more efficient and robust channel representation compared to predefined sparsifying matrices.
For both CS based downlink and SR based uplink channel estimation, the learned dictionary can improve the performance and reduce the training overhead.
Motivated by the angular reciprocity between the uplink and downlink channel responses, we further develop a joint dictionary learning based channel model in order to utilize the relatively simpler uplink channel training to help improving the downlink channel estimation.
The results of this paper show that concepts of utilizing sparse property and learning from the data can be useful for future communication systems. 

As future work, several topics are under consideration. To learn the dictionary, extensive channel measurements are needed as the training samples. Besides using conventional drive tests to collect data, minimization of drive tests (MDT), specified in 3GPP release 10 \cite{hapsari2012minimization}, is a promising approach. The main concept is to exploit commercial user equipments, such as their measurement capabilities and geographically spread nature, for collecting radio measurements. Another option is to explore online dictionary learning \cite{mairal2009online}, 
where an initial dictionary is first learned from limited training samples, and then updated as more training samples are obtained.
Online dictionary learning can also be used to deal with the slowly changing cell and antenna characteristics, and adapt to specific user distribution properties in the cell, which is hard to be captured at cell deployment stage.  
Finally, for joint channel estimation, a looser relationship between the supports of the uplink and downlink sparse coefficients may be utilized instead of the strict constraint \(\text{supp}(\bm \beta^u)=\text{supp}(\bm \beta^d)\) to better model the angular reciprocity, for example allowing some mismatch between \(\text{supp}(\bm \beta^u)\) and \(\text{supp}(\bm \beta^d)\) through a Bayesian formulation.

\ifCLASSOPTIONcaptionsoff
  \newpage
\fi

\bibliographystyle{IEEEtran}
\bibliography{DLCMbib}

\end{document}